\documentclass[letterpaper,conference]{IEEEtran}
\IEEEoverridecommandlockouts 
\usepackage{amsmath,amssymb,amsfonts,amsthm}
\usepackage{algorithmic}
\usepackage{algorithm}
\usepackage{array}
\usepackage{textcomp}
\usepackage{stfloats}

\usepackage{verbatim}
\usepackage{graphicx}
\usepackage[caption=false,font=footnotesize,
labelfont=sf,textfont=sf]{subfig}
\usepackage{cite}
\usepackage{xcolor}
\usepackage{braket}
\usepackage[nolinks=true,bookmarks=false]{hyperref}
\usepackage[T1]{fontenc}
\usepackage{gensymb}
\usepackage{orcidlink}
\usepackage{braket}
\usepackage{subfig}

\usepackage{mathtools}
\mathtoolsset{centercolon}

\theoremstyle{plain}
\newtheorem{thm}{Theorem}

\newtheorem{lem}{Lemma}
\newtheorem{cor}{Corollary}

\newtheorem{rk}{Remark}
\newtheorem*{attr*}{Attribution}

\theoremstyle{definition}

\def\BibTeX{{\rm B\kern-.05em{\sc i\kern-.025em b}\kern-.08em
    T\kern-.1667em\lower.7ex\hbox{E}\kern-.125emX}}

\DeclareMathOperator{\erfc}{erfc}
\newcommand{\dif}{\mathop{}\!\mathrm{d}}

\allowdisplaybreaks

\begin{document}

\title{Toward Practical Two-Way Covert Communication 
}

\author{\IEEEauthorblockN{   Paul N.~Fessatidis\IEEEauthorrefmark{1}, Wyatt Wallis\IEEEauthorrefmark{1}\IEEEauthorrefmark{2}, Tae E. Cooper\IEEEauthorrefmark{1}, Mark J. Meisner\IEEEauthorrefmark{4}, Jaim Bucay\IEEEauthorrefmark{4},\\
Saikat Guha\IEEEauthorrefmark{5}, Shelbi L. Jenkins\IEEEauthorrefmark{1}\IEEEauthorrefmark{2},  Michael S. Bullock\IEEEauthorrefmark{3}\IEEEauthorrefmark{1}, and Boulat A. Bash\IEEEauthorrefmark{1}\IEEEauthorrefmark{2}} \\
		
    \IEEEauthorblockA{\normalsize
        \IEEEauthorrefmark{1} Electrical and Computer Engineering Department, University of Arizona, Tucson, AZ, USA \\
        \IEEEauthorrefmark{2} Physics Department, University of Arizona, Tucson, AZ, USA  \\
        \IEEEauthorrefmark{3} Manning College of Information and Computer Sciences, University of Massachusetts, Amherst, MA, USA\\
        \IEEEauthorrefmark{4} Raytheon, Tucson, AZ, USA\\
        \IEEEauthorrefmark{5} Department of Electrical and Computer Engineering, University of Maryland, College Park, MD, USA
        }\thanks{Paul N.~Fessatidis and Wyatt Wallis are co-first authors. This work was supported by Raytheon. This document does not contain technology or technical data controlled under either U.S.~International Traffic in
Arms Regulation or U.S.~Export Administration Regulations.
The views expressed here are those of the authors and should not be interpreted as representing any employer
or funding organization.
}}

\maketitle

\begin{abstract}

We study two-way covert communication schemes, where information is transmitted by passively modulating a reflected signal back to the source. We consider optical systems, described by quantum bosonic channels.
While broadband classical and quantum light sources offer high covert throughput in theory, the associated mode-matching  and phase-synchronization requirements make them impractical. 
Therefore, we employ a narrowband laser source to experimentally demonstrate a proof-of-concept two-way covert communication system, where the adversary is assumed to be quantum-capable.
Furthermore, we propose a correlator-based receiver that attains the broadband gain offered by a quantum light source without the need for precise mode matching.   
\end{abstract}

\section{Introduction}
\label{sec:intro}
Covert, or low probability of detection/intercept (LPD/LPI) communication, aims to reliably transmit information while avoiding detection by an adversary \cite{bash12sqrtlawisit, bash13squarerootjsacnonote, bloch15covert, wang15covert,bash15covertcommmag, chen23covcommssurvey}. This contrasts standard encryption methods, which merely prevent an adversary from accessing the contents of a transmission \cite{menezes96HAC}. The stricter requirement of covert communication has been well studied for classical channels over the past decade, yielding the \emph{square root law} (SRL), which states that the number of covert bits transmitted scales as $\sqrt{n}$ over $n$ channel uses\cite{bash12sqrtlawisit, bash13squarerootjsacnonote, bloch15covert, wang15covert,bash15covertcommmag, chen23covcommssurvey}. While the Shannon capacity of the covert channel is zero, since  $\lim_{n\to\infty}\frac{\sqrt{n}}{n}=0$, the SRL still allows transmission of many covert bits for large $n$. 

Analysis of the fundamental limits of optical communication systems requires a quantum mechanical description of the system \cite{wilde16quantumit2ed}, which led to the development of the SRL over bosonic channels\cite{bash15covertbosoniccomm, bullock20discretemod, gagatsos20codingcovcomm}. 
However, experimental exploration of SRL-based covert communication remains scarce, with only a few existing works \cite{bash15covertbosoniccomm, liu24metrofibercovcomm, Djordjevic25covertSLMatmospheric, Djordjevic25covertSLM, bali25covertsdr-milcom, bali26covertsdr}. 
Our work is inspired by the recent results in covert phase sensing \cite{bash19covertsensor, gagatsos19floodlightsensorcleo, gagatsos19floodlightsensor, Hao_2022}, which show that broadband sources yield substantial performance gains under strict power constraints imposed by covertness requirements.
In this scenario, the mean square error of a phase estimator is governed by the SRL with respect to the number of optical modes of the probe \cite{bash17qcovertsensingisit}, and is substantially improved when using a broadband thermal state as a probe and as a local resource for a heterodyne detection measurement \cite{gagatsos19floodlightsensorcleo, gagatsos19floodlightsensor}.

We transform these covert sensing techniques into a two-way covert communication scheme, where Alice transmits displaced thermal states to Bob, which he phase modulates to encode information and returns to Alice. Adversary Warden Willie tries to determine if communication between the two parties occurs. We find that covert communication benefits from the use of thermal sources due to their vastly greater optical bandwidth. However, using broadband sources requires precise mode matching at the receiver. The performance sacrificed using laser inputs is offset by their viability in experimental demonstrations. 
Thus, we employ narrowband lasers for our proof-of-concept demonstration of two-way covert communication.
Furthermore, we propose a new transceiver that uses a quantum two-mode squeezed vacuum (TMSV) light source and a correlator. It is based on recent clock synchronization schemes \cite{Lee_2022} and provides broadband gain without the mode-matching cost. 

Our paper is organized as follows.
In Section \ref{sec:prerequisite}, we introduce the two-way covert communication model and formulate the covertness criterion. In Section \ref{2-wayconsid}, we discuss the strict input power constraints imposed by covertness and adapt the sparse signaling scheme from \cite{tahmasbi21signalingcovert, tan26covertsignaling-arxiv} to a practical scenario where the modulator bandwidth is smaller than the optical bandwidth of the source. We then discuss the benefits and drawbacks of broadband and narrowband input sources and present theoretical analysis of covertness requirement that generalizes to both. In Section \ref{sec:Experiment_Setup}, we present experimental results demonstrating the proof-of-concept two-way covert communication system using a laser source, before proposing a novel transceiver structure in Section \ref{sec:CCR} that employs TMSV and a correlator. We compare its covert throughput with a several phase-sensitive schemes, showing its superior performance. We conclude in Section \ref{sec:discussion}. 

 \section{Prerequisites}
 \label{sec:prerequisite}
 \subsection{Bosonic channel}
Our fundamental transmission unit is a single orthogonal mode of the electromagnetic field at some wavelength (e.g., optical).
A mode is akin to a channel use in information theory.
We assume $n=TW$ available modes, where $TW$ is the time-bandwidth product.
Lossy thermal-noise bosonic channel models quantum-mechanically describe the transmission of a single mode over linear loss $\eta$ and additive Gaussian noise (such as noise stemming from blackbody radiation) with mean photon number $\bar{n}$.

It is described by a beamsplitter with transmittance $\eta$ where the input from the environment is a thermal state with mean photon number $\bar{n}\geq 0$, $\hat{\rho}_{th}(\bar{n}) = \sum_{k=1}^\infty \frac{\bar{n}^k}{(\bar{n}-1)^{k+1}}\ket{k}\bra{k}=\int_\mathbf{} \dif ^2\beta \frac{\exp[\frac{-|\beta|^2}{\bar{n}}]}{\pi \bar{n}} \ket{\beta}\bra{\beta}$, represented in the photon-number and coherent-state bases, respectively. Photon-number states, $\{ \ket{k}:k\in\mathbb{Z}_{\geq0}\}$ are defined by the number of photons in the given mode, $k$, and are the eigenstates of the photon-number operator, $\hat{N}\ket{k}=\hat{a}^\dagger\hat{a}\ket{k}=k\ket{k}$, where $\hat{a}$ and $\hat{a}^\dagger$ are the annihilation and creation operators for the given mode such that $\hat{a}\ket{k}=\sqrt{k}\ket{k-1}$ and $\hat{a}^\dagger\ket{k}=\sqrt{k+1}\ket{k+1}$. Coherent states are eigenstates of the annihilation operator, $\hat{a}\ket{\beta}=\beta\ket{\beta},\;\beta\in\mathbb{C}$, and represent ideal laser sources. Note that $\bar{n}= 0$ for the lossy thermal-noise bosonic channel reduces to the pure loss channel without injected noise. 
 
\subsection{Two-Way Channel Model}
\label{subsec:2way}
Our two-way channel model, depicted in Fig.~\ref{fig:channel_model}, contains two lossy thermal-noise bosonic channels, each parameterized by transmittance $\eta_i$ and thermal-noise mean photon number $\bar{n}_{B_i}$, $i\in\{1,2\}$. It is inspired by the theory \cite{bash19covertsensor, gagatsos19floodlightsensorcleo, gagatsos19floodlightsensor} and experiments in \cite{Hao_2022}. These works show that covert sensors based on broadband sources such as amplified spontaneous emission (ASE) and spontaneous parametric down-conversion (SPDC), significantly outperform the narrowband laser ones. 
This is because the former provide vastly greater number of available modes for the same integration time.
\begin{figure}[htpb]
    \centering
    \includegraphics[width=0.85\linewidth]{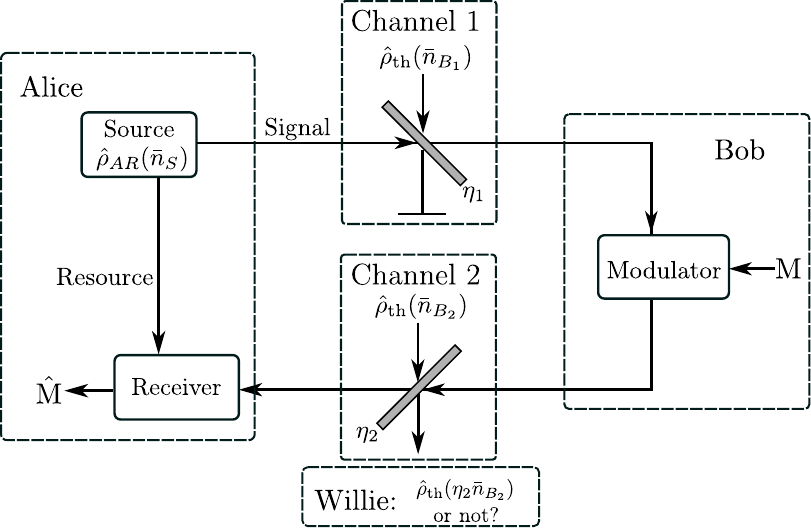}
    \caption[Two-way covert communication model where Willie stands behind Alice.]{Two-way covert communication model where Willie stands behind Alice, described in Section \protect\ref{subsec:2way}.
    }
\label{fig:channel_model}

\end{figure}
Alice can generate signals, retain local resources in quantum memories, transmit to Bob over channel 1 and receive from Bob over channel 2. Bob can receive from Alice over channel 1, modulate received signals to encode information, and transmit to Alice over channel 2. We restrict Bob to \emph{passive} operations that cannot amplify or otherwise add energy into channel 2 not provided by channel 1. We assume Willie has access to arbitrary quantum measurement, memory, computing resources, and complete knowledge of transmission parameters, allowing him to filter all but the relevant band used for communication. However, 
he can only access photons lost to the environment in either channel 1 or channel 2. That is, he can observe either Alice or Bob, but not both.
This corresponds to restricting him outside the line of sight in free-space optical communication.
Now, if Willie chooses to observe Alice via channel 1 environment output, Bob can modulate thermal noise from channel 1 and transmit it on channel 2 undetected (since this does affect the channel 1 environment).
Hence, Willie observes Bob via the environment output for channel 2.
Bob's no-transmission ``innocent'' input into channel 2 is vacuum, i.e., he reflects no light from either Alice or the channel 1 environment. 

\subsection{Hypothesis Testing and Covertness Criterion}
\label{sec:covertnesscriterion}

Given access to channel 2 environment, Willie's goal is to detect whether transmission has occurred. 
When Bob does not transmit, Willie receives an $n$-fold thermal state $\hat{\rho}_0^{W^n}=\hat{\rho}_{\eta\bar{n}}^{\otimes n}$, otherwise he receives another state $\hat{\rho}_1^{W^n}$.
Assuming equally likely prior probabilities on whether or not Bob transmits, Willie's decision error probability is $P_e^{(w)} = \frac{P_{\text{FA}}+P_{\text{MD}}}{2}$, where $P_{\text{FA}}$ and $P_{\text{MD}}$ are the probabilities of false alarm and missed detection. The quantum-optimal receiver yields $P_{e,{\text{min}}}^{(w)}=\frac{1}{2}-\frac{1}{4}\|\hat{\rho}_1^{W^n}-\hat{\rho}_0^{W^n}\|_1$, where $\|\hat{\rho}-\hat{\sigma}\|_1$ is the trace distance between  quantum states $\hat{\rho}$ and $\hat{sigma}$ \cite{wilde16quantumit2ed}. Therefore, we say \emph{a system is covert if, for some} $\delta_p>0$, $P_e^{(w)}\geq\frac{1}{2}-\delta_p$. However, the trace distance is mathematically cumbersome, while the quantum relative entropy (QRE) is more convenient and relates to the trace distance via Pinsker's inequality, $\|\hat{\rho}-\hat{\sigma}\|_1 \leq \sqrt{2D(\hat{\rho}\|\hat{\sigma})}$. Thus, we use the following covertness criterion: \emph{a system is covert if, for some} $\delta_{\text{QRE}}>0$, $D(\hat{\rho}_1^{W^n}\|\hat{\rho}_0^{W^n})\leq \delta_{\text{QRE}}=2\delta_p^2$.

\section{Practical Considerations for Two-way Covert Communication}
\label{2-wayconsid}

\subsection{Signaling Scheme}
The covertness criterion in Section \ref{sec:covertnesscriterion} imposes a strict limit on Bob's output power: he can transmit $\propto\sqrt{\frac{\delta_{\text{QRE}}}{n}}$ photons per mode on average over $n$ modes \cite{bullock20discretemod}.
This requires the use of either \emph{diffuse} or \emph{sparse} signaling \cite{tahmasbi21signalingcovert, tan26covertsignaling-arxiv}. The former spreads the total $\propto\sqrt{n}$-photon energy budget over all $n$ available modes whereas the latter uses a constant photon number in $\propto\sqrt{n}$ out of $n$ randomly selected modes. Although diffuse signaling can improve covert performance (e.g., using entanglement assistance \cite{gagatsos20codingcovcomm}), it is impractical due to finite resolution of digital-to-analog and analog-to-digital converters (DAC/ADC). Thus, sparse signaling is used in covert communication experiments, including \cite{bash15covertbosoniccomm, bali25covertsdr-milcom, bali26covertsdr}. 

However, direct application of sparse signaling is challenging for broadband sources, as transmitting on individual modes requires a high-bandwidth intensity modulator. Thus, we consider \emph{symbol-sparse signaling}, where $n$ available modes are partitioned into $n/M$ symbol slots of $M$ modes each.
Bandwidths $W$ and $W_M$ of Alice's transmission and Bob's modulator, respectively, determine the symbol length $M=\frac{W}{W_M}$. 
Alice and Bob flip a biased coin $n/M$ times.
Heads have probability $\tau$ and indicate the symbol slots used for transmission; 
tails indicate the unused slots.
The resulting binary sequence is part of Alice and Bob's shared secret.

\subsection{Input Signal Modulation}
\label{subsec:input_signal}
Large number of optical modes offered by broadband sources such as ASE and SPDC provide significant performance gains for covert two-way systems.
However, those employing retained reference (e.g., optical homodyne, phase-conjugate receiver (PCR), and others \cite{guha09quantumilluminationOPA, zhuang17ff-sfg}) need it to be mode-matched to the returned signal.
This translates to requiring the reference and signal path lengths to be within the coherence length of the source, which is inversely proportional to its bandwidth. 
For example, the coherence length is $\approx 150~\mu$m for the $W=2$ THz sources in \cite{Hao_2022}.
Mobility, atmospheric turbulence, and other artifacts challenge attaining such precision in practice.
Thus, we consider a phase insensitive receiver for broadband SPDC source in Section \ref{sec:CCR}, and a narrow-band coherent laser source. 
The latter trades theoretical performance for practicality and enables the two-way covert communication experiment in Section \ref{sec:Experiment_Setup}. 

Suppose Alice uses a quasi-monochromatic laser source followed by an erbium doped fiber amplifier (EDFA).
Her input symbol is then described by $\hat{\rho}_{\text{th}}(\bar{n}_S,\alpha_0)\otimes\hat{\rho}_{\text{th}}(\bar{n}_S)^{\otimes M-1}$, where  $\hat{\rho}_\text{th}(\bar{n}_S,\alpha_0)=\int_\mathbf{} \dif ^2\beta \frac{\exp[\frac{-|\beta-\alpha_0|^2}{\bar{n}_S}]}{\pi \bar{n}_S} \ket{\beta}\bra{\beta}$ is a displaced thermal state with displacement $\alpha_0=|\alpha_0|\exp(j\theta_0) \in \mathbb{C}$, phase $\theta_0$, and mean thermal photon number $\bar{n}_S$. Note that the displaced thermal state occupies \emph{a single mode}, and all other modes are thermal. Channel 1 evolves the state to $\hat{\rho}_{\text{th}}(\bar{n}_S^\prime,\sqrt{\eta_1}\alpha_0)\otimes\hat{\rho}_{\text{th}}(\bar{n}_S^\prime)^{\otimes M-1}$ where $\bar{n}_S^\prime= (1-\eta_1)\bar{n}_{B_1} + \eta_1\bar{n}_{S}$. 
The total mean photon number per mode of this state, $\bar{n}_T=\bar{n}_\alpha/M+\bar{n}_{S}^\prime$ with $\bar{n}_\alpha=\eta_1|\alpha_0|^2$, is constrained by the covertness criterion from Section \ref{sec:covertnesscriterion}.

For phase $\theta$, Willie's input state has the same form, having evolved through channels 1 and 2: $\hat{\rho}_{\text{th}}(\bar{n}_W,\alpha_\theta)\otimes\hat{\rho}_{\text{th}}(\bar{n}_W)^{\otimes M-1}$ where $\alpha_\theta=\sqrt{\eta_1(1-\eta_2)}\alpha_0\exp(j\theta)$ and $\bar{n}_W= (1-\eta_2)\bar{n}_S^\prime + \eta_2\bar{n}_{B_2}$.
Recall that, when Bob does not transmit, Willie's input is $\hat{\rho}_0^{W^n}=\hat{\rho}_{\text{th}}(\eta_2\bar{n}_{B_2})^{\otimes n}$. Although Willie is aware of the signaling scheme, his ignorance of Alice and Bob's pre-shared secret results in the mixed input state when Bob transmits a symbol with phase $\theta$: 
$\hat{\rho}_1^{W^M}(\theta)=(1-\tau)\hat{\rho}_{\text{th}}(\eta_2\bar{n}_{B_2})^{\otimes M}+\tau\hat{\rho}_{\text{th}}(\bar{n}_W,\alpha_\theta)\otimes\hat{\rho}_{\text{th}}(\bar{n}_W)^{\otimes M-1}$.
Alice and Bob use quadrature phase-shift keying (QPSK) modulation, and deny Willie the symbol-phase information via a pre-shared secret. Therefore, Willie's overall input state is $\hat{\rho}_1^{W^n}=\left(\frac{1}{4}\sum_{q=0}^3\hat{\rho}_1^{W^M}(q\pi/2)\right)^{\otimes n/M}$.
This allows characterizing the symbol transmission probability, $\tau$, for the symbol-sparse signaling scheme. 

\begin{thm}\label{thm:sparse_signaling}
Suppose Alice and Bob employ symbol-sparse signaling for QPSK modulated displaced thermal states. For a fixed return signal mean photon number per mode, $\bar{n}_T\geq0$, $\tau\leq  \frac{1}{(1-\eta_2)} \sqrt{\frac{2\eta_2\bar{n}_{B_2}(1+\eta_2\bar{n}_{B_2})}{\bar{n}_\alpha^2+2\bar{n}_\alpha\bar{n}_S^\prime+M\bar{n}_S^{\prime2}}}\sqrt{\frac{\delta_\text{QRE}}{n/M}}$ maintains the covertness criterion, $D(\hat{\rho}_1^{W^n}\|\hat{\rho}_0^{W^n})\leq \frac{\delta_{\text{QRE}}}{n}$. 
\end{thm}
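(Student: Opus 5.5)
The plan is to reduce the $n$-mode relative entropy to a single-symbol quantity, and then bound that quantity to second order in the sparsity parameter $\tau$. Since $\hat{\rho}_1^{W^n}$ and $\hat{\rho}_0^{W^n}$ are both $n/M$-fold tensor powers of $M$-mode states, additivity of the QRE gives
\begin{align*}
D(\hat{\rho}_1^{W^n}\|\hat{\rho}_0^{W^n})=\frac{n}{M}\,D\bigl(\bar{\rho}_1^{W^M}\big\|\hat{\rho}_{\text{th}}(\eta_2\bar{n}_{B_2})^{\otimes M}\bigr),
\end{align*}
where $\bar{\rho}_1^{W^M}=\frac14\sum_{q}\hat{\rho}_1^{W^M}(q\pi/2)=(1-\tau)\hat{\sigma}+\tau\hat{\rho}'$. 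Here $\hat{\sigma}=\hat{\rho}_{\text{th}}(\eta_2\bar{n}_{B_2})^{\otimes M}$, and $\hat{\rho}'$ is the QPSK-averaged displaced-thermal symbol state. The task is therefore to show that the single-symbol QRE is at most $\tau^2 C/2$, with $C$ read off the stated bound. Substituting the stated $\tau$ then gives a total of at most $\delta_{\text{QRE}}$. I would match the normalization in the statement ($\delta_{\text{QRE}}$ versus $\delta_{\text{QRE}}/n$) to the convention in the paper's proof.

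For the single-symbol step, I would use the standard expansion of the QRE of a mixture against one of its components, as in \cite{bullock20discretemod}:
\begin{align*}
D\bigl((1-\tau)\hat{\sigma}+\tau\hat{\rho}'\big\|\hat{\sigma}\bigr)=\frac{\tau^2}{2}\,\chi^2(\hat{\rho}'\|\hat{\sigma})+\mathcal{O}(\tau^3).
\end{align*}
Here $\chi^2$ is the quantum $\chi^2$-type (Bogoliubov--Kubo--Mori) divergence, whose integral representation is $\int_0^\infty \operatorname{Tr}[(\hat{\rho}'-\hat{\sigma})(\hat{\sigma}+s)^{-1}(\hat{\rho}'-\hat{\sigma})(\hat{\sigma}+s)^{-1}]\,\dif s$. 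The first-order term vanishes because $\operatorname{Tr}(\hat{\rho}'-\hat{\sigma})=0$. Alternatively, to obtain a bound rather than an expansion, I would use the operator inequality $D(\hat{\rho}\|\hat{\sigma})\le\operatorname{Tr}[\hat{\rho}^2\hat{\sigma}^{-1}]-1$ applied to the mixture. This yields exactly $\tau^2\chi^2$ with the $\hat{\sigma}^{-1}$-weighted form, and it is valid whenever that quantity is finite.

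The main obstacle is evaluating $\chi^2(\hat{\rho}'\|\hat{\sigma})$ in closed form and showing that, to leading order, it equals
\begin{align*}
\frac{(1-\eta_2)^2\bigl(\bar{n}_\alpha^2+2\bar{n}_\alpha\bar{n}_S'+M\bar{n}_S'^2\bigr)}{\eta_2\bar{n}_{B_2}(1+\eta_2\bar{n}_{B_2})}.
\end{align*}
My approach is to work in the coherent-state (P-function) representation, where everything is Gaussian, and to exploit the structure of the states:
\begin{itemize}
\item $\hat{\sigma}$ is diagonal in the Fock basis with $\hat{\sigma}^{-1}$ explicit.
\item The QPSK average suppresses the phase-dependent terms of the displacement $\alpha_\theta$.
\item Writing $\bar{n}_W=\eta_2\bar{n}_{B_2}+(1-\eta_2)\bar{n}_S'$, the perturbation $\hat{\rho}'-\hat{\sigma}$ is small, of order $(1-\eta_2)$, on each mode.
\end{itemize}
Expanding to second order in $(1-\eta_2)\bar{n}_S'$ and $(1-\eta_2)\bar{n}_\alpha$ gives three contributions:
\begin{itemize}
\item each of the $M$ thermal-excess modes contributes $(1-\eta_2)^2\bar{n}_S'^2/[\eta_2\bar{n}_{B_2}(1+\eta_2\bar{n}_{B_2})]$, which accounts for the $M\bar{n}_S'^2$ term;
\item the displaced mode contributes the $\bar{n}_\alpha^2$ term;
\item the cross term between excess noise and displacement on that same mode gives $2\bar{n}_\alpha\bar{n}_S'$.
\end{itemize}
These are the Fock-diagonal photon-number variance contributions, computed relative to the thermal variance $\eta_2\bar{n}_{B_2}(1+\eta_2\bar{n}_{B_2})$. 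If an exact finite bound proves intractable, I would follow the earlier covert bosonic analyses and justify neglecting the higher-order terms in $\tau$ and $(1-\eta_2)$, since the covert regime forces these to be small.

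Finally, I would impose $\frac{n}{M}\cdot\frac{\tau^2}{2}\chi^2\le\delta_{\text{QRE}}$ and solve for $\tau$, which yields the stated expression.
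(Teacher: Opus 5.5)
Your proposal is correct and reaches the paper's leading-order expression by a different route. The paper keeps $\tau$ fixed and spreads the displacement over all $M$ modes by a passive unitary. It then Taylor-expands $\operatorname{Tr}\hat R$ directly in $b=\sqrt{\bar n_T}$ to fourth order, using integral formulas for the derivatives of the operator logarithm and inverse. The first three derivatives vanish at $b=0$, and the fourth gives $\frac{\tau^2(1-\eta_2)^2}{2\bar n_0(1+\bar n_0)}M\bar n_T^2[M(1-x)^2+x(2-x)]$ with $\bar n_0=\eta_2\bar n_{B_2}$; substituting $x=\bar n_S'/\bar n_T$ turns this into your expression.

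You instead expand in $\tau$ first, reducing the problem to $\frac{\tau^2}{2}\chi^2_{\mathrm{BKM}}(\hat\rho'\|\hat\sigma)$. Only then do you expand $\chi^2$ in the signal power, with the displacement kept on a single mode. Your mode count rests on two easy facts that you should state explicitly:
\begin{itemize}
\item inter-mode cross terms vanish at leading order, because each single-mode perturbation is traceless;
\item QPSK averaging removes every term carrying $e^{\pm j\theta}$, $e^{\pm 2j\theta}$ or $e^{\pm 3j\theta}$, so to first order in power the averaged displaced mode is just $\hat\rho_{\mathrm{th}}(\bar n_W+(1-\eta_2)\bar n_\alpha)$.
\end{itemize}
The symbol is then a set of Fock-diagonal thermal perturbations of $\hat\rho_{\mathrm{th}}(\bar n_0)$. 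The excesses are $(1-\eta_2)(\bar n_S'+\bar n_\alpha)$ on one mode and $(1-\eta_2)\bar n_S'$ on the other $M-1$, each weighted by the geometric Fisher information $1/[\bar n_0(1+\bar n_0)]$. This gives $(\bar n_S'+\bar n_\alpha)^2+(M-1)\bar n_S^{\prime 2}$ immediately.

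Your route is much shorter and explains why $\bar n_S'$ carries the factor $M$ while $\bar n_\alpha$ does not. It also separates two approximations: the expansion in $\tau$, which is the natural one for fixed $\bar n_T$ with $\tau\propto\sqrt{M/n}$, and the small-power step, which is the same approximation the paper makes by expanding in $b$. The paper's computation is self-contained but long. Both routes discard higher-order terms, so neither yields a strict upper bound.

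Your rigorous fallback $D\le\operatorname{Tr}[\hat\rho^2\hat\sigma^{-1}]-1$ gives $\tau^2\chi^2$, not $\frac{\tau^2}{2}\chi^2$. It therefore certifies the statement only with $\tau$ reduced by $\sqrt2$. It also applies only when the trace is finite, which for the thermal factors requires $\bar n_W^2(1+\bar n_0)<\bar n_0(1+\bar n_W)^2$. Your reading of the normalization matches the paper's proof: the stated $\tau$ gives a per-symbol QRE of $M\delta_{\mathrm{QRE}}/n$ and a total of $\delta_{\mathrm{QRE}}$.
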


The proof is in Appendix \ref{ap:covertness}. 
The following corollary provides the fundamental limit of symbol-sparse signaling for covert communication:

\begin{cor} \label{cor:sparse_throughput}
    Suppose that the transceiver choice and channel parameters induce a classical channel between Bob and Alice with Shannon capacity $C>0$ for an $M$-mode symbol. For $n$ large enough and $\tau$ selected according to Theorem \ref{thm:sparse_signaling}, there exists a code $\mathcal{C}$ that allows transmitting at least $L\sqrt{n}$ bits with decoding error probability $\bar{P}_e(\mathcal{C})\leq \epsilon_c, \text{ and } D\left(\hat{\bar \rho}^n\middle\|\hat{\rho}_{0}^{\otimes n}\right) \leq \delta_{\rm QRE}$, where $L=(1-\vartheta) \tau C\sqrt{n/M}$ is the \emph{covert-capacity lower bound}, $\hat{\bar \rho}^n$ is Willie's output state when Bob transmits (averaged over $\mathcal{C}$), $\hat{\rho}_{0} = \hat{\rho}_{th}(\eta_2\bar{n}_{B_2})$ is Willie's output when Bob is quiet, and  $\epsilon_c,\delta_{\rm QRE}, \vartheta > 0$ are constants.
\end{cor}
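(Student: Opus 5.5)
Since Theorem \ref{thm:sparse_signaling} already fixes $\tau$ so that Willie's per-symbol relative entropy is small, I would prove Corollary \ref{cor:sparse_throughput} by a random-coding argument over the sparse support, in the style of the classical and bosonic square-root-law achievability proofs \cite{bash15covertbosoniccomm, bullock20discretemod}. There are two separate tasks: a covertness part, which needs only the additivity of quantum relative entropy, and a reliability part, which needs a channel-coding bound for the induced classical channel, evaluated only on the used symbol slots.

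\textbf{Covertness.} Alice and Bob secretly share the Bernoulli($\tau$) slot-selection sequence, the QPSK phases (through a secret dither/one-time pad on the phases), and the codebook. Averaged over this secret, Willie's state is exactly $\hat{\bar\rho}^n=\hat{\rho}_1^{W^n}$ as written before Theorem \ref{thm:sparse_signaling}, namely an $n/M$-fold product of the per-symbol mixture. Two facts then give the bound. First, relative entropy is additive over tensor products, so $D(\hat{\bar\rho}^n\|\hat{\rho}_0^{\otimes n})=(n/M)\,D(\cdot\|\cdot)_{\text{symbol}}$. Second, Theorem \ref{thm:sparse_signaling} bounds the per-symbol term by $\delta_{\rm QRE}/(n/M)$; the "$\delta_{\text{QRE}}/n$" in its statement is per-symbol normalization. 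Multiplying gives $D\le\delta_{\rm QRE}$. To make the average over the codebook equal the product state, I would let the codewords be i.i.d. uniform QPSK, which after the phase dither is the uniform phase mixture.

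\textbf{Reliability.} Alice knows which slots are active, so she decodes only on those $K\approx\tau n/M$ slots. By a Chernoff bound, $K\ge(1-\vartheta/2)\tau n/M$ with probability tending to one. Over the active slots she sees $K$ independent uses of the induced classical channel, whose capacity is $C$. I would apply a standard finite-blocklength random-coding bound, either Feinstein's lemma or the information-spectrum bound $P_e\le \Pr[i(X;Y)\le \log|\mathcal{C}|+\gamma]+2^{-\gamma}$. Choosing $\log|\mathcal{C}|=(1-\vartheta)KC$ and noting $K\to\infty$ (because $\tau n/M\propto\sqrt{n/M}$), the weak law of large numbers gives error at most $\epsilon_c$ for large $n$. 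With $\tau$ set by Theorem \ref{thm:sparse_signaling}, the number of bits is $(1-\vartheta)\tau C\, n/M=L\sqrt{n}$ with $L=(1-\vartheta)\tau C\sqrt{n/M}/\sqrt{n}\cdot\sqrt{n}$. In other words, $L$ as defined is order one because $\tau\sqrt{n/M}$ is constant.

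\textbf{Expected obstacle.} The delicate step is using a single code that is simultaneously reliable and covert. Two tools handle this. First, I would use expurgation or Markov's inequality over the random-code ensemble, so that both the average error and the relative entropy (which equals the ensemble average exactly, by the secret dither) hold for one code with slightly inflated constants. Second, the i.i.d. input distribution must be capacity-achieving. If the capacity-achieving input is not uniform QPSK, I would define $C$ as the rate under uniform QPSK inputs, which is what the statement's "induced channel" implicitly fixes.
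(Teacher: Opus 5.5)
Your route is essentially the paper's. Its proof just invokes the sparse-signaling construction of \cite[Lem.~5]{anderson2025covertentanglementgenerationbosonic}, with a classical code in place of the quantum one, and then uses additivity of the QRE and Theorem~\ref{thm:sparse_signaling}. Your secret-support, phase-dither, random-coding argument spells exactly that out. You also correctly read the theorem's ``$\delta_{\rm QRE}/n$'' as the per-slot bound $\delta_{\rm QRE}/(n/M)$ that the appendix actually derives.

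\textbf{The final bit count is wrong.} Your expression for $L$ simplifies to $(1-\vartheta)\tau C\sqrt{n/M}$, so $L\sqrt{n}=(1-\vartheta)\tau C\,n/\sqrt{M}$. Your construction, however, delivers $(1-\vartheta)\tau C\,n/M=L\sqrt{n/M}$ bits. The identity you wrote is therefore off by a factor $\sqrt{M}$ and holds only for $M=1$, which is the laser-experiment setting. This cannot be fixed within the scheme. The code occupies about $\tau n/M$ uses of a channel with capacity $C$ per use, so Fano's inequality caps the message at about $\tau C n/[M(1-\epsilon_c)]$ bits. That is below $L\sqrt{n}$ once $(1-\vartheta)(1-\epsilon_c)\sqrt{M}>1$. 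The paper's one-line proof does not address this normalization either. You should state what your argument actually proves: $L\sqrt{n/M}$ bits, or equivalently $L'\sqrt{n}$ bits with $L'=(1-\vartheta)\tau C\sqrt{n}/M$, which is still a constant since $\tau\propto\sqrt{M/n}$. Do not force the match.

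\textbf{Two smaller repairs.}

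First, do not let $\log|\mathcal{C}|$ depend on the random $K$. Fix the codeword length at $K_0=\lceil(1-\vartheta/2)\tau n/M\rceil$ slots and the rate at $\frac{1-\vartheta}{1-\vartheta/2}C<C$. Place the codeword on the first $K_0$ selected slots, and declare an error when $K<K_0$. This is the paper's remark that the number of used slots can be pushed arbitrarily close to $\tau n/M$ through $\vartheta$.

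Second, have Bob send dithered symbols on every selected slot, including leftover slots and the event $K<K_0$. Then, for every code and message, Willie's key-averaged state is exactly the $(n/M)$-fold product used in Theorem~\ref{thm:sparse_signaling}. Two consequences follow:
\begin{itemize}
\item covertness needs no expurgation or Markov step;
\item since Alice removes the known dither, the codebook may use any capacity-achieving input distribution, so your fallback of redefining $C$ under uniform QPSK inputs is unnecessary.
\end{itemize}
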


The proof follows from the sparse signaling code construction and analysis in \cite[Lem.~5]{anderson2025covertentanglementgenerationbosonic}, with quantum error correction code replaced with a classical error correction code, suitable substitution of states in the covertness analysis and employing Theorem \ref{thm:sparse_signaling}. 
Note that in the proof therein the number of selected symbol slots to use for transmission can be set arbitrarily close to $\tau (n/M)$ via control of $\vartheta$. 

\section{Experiment Implementation and Results}
\label{sec:Experiment_Setup}

\subsection{Experiment design}
\label{subsec:exp_design}
Our experiment design is in Fig.~\ref{fig:DESIGN}. Alice generates a signal at 1550 nm using a continuous wave Amonics ADFB-1550-20-B laser. It is split along the signal and reference arms. The signal arm emulates transmission to Bob and is attenuated with a Thorlabs V1550A attenuator. Bob modulates a square-wave binary phase-shift keying (BPSK) signal using a Thorlabs LN65S electro-optic modulator.
Thermal noise from Pri-tel FA-18 EDFA  is filtered by a wavelength-division multiplexer (WDM) to a 30.508 nm width centered at 1550 nm and injected into the return path using a 50-50 splitter. The resulting noisy signal is output to Alice and Willie. We measure the signal and noise power at Willie's output using a Yokogawa AQ6370B optical spectrum analyzer (OSA). This allows us to determine the values of $\eta_2\bar{n}_{B_2}$ and $(1-\eta_2)\bar{n}_\alpha$, which we then use with Theorem \ref{thm:sparse_signaling} and Corollary \ref{cor:sparse_throughput} to determine covert throughput that our system allows. 

\begin{figure}[ht]
\centering
    \includegraphics[width=0.35\textwidth]{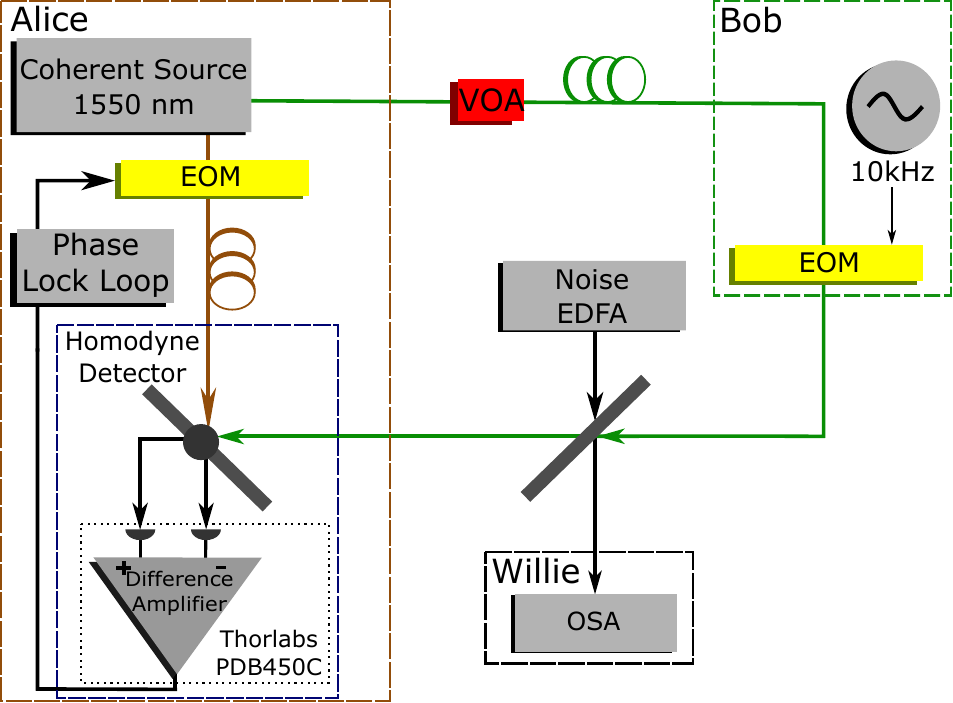}
\caption[Experiment Setup]{
Experimental setup. The coherent source (Amonics ADFB-1550-20-B) is split into the signal (green) and reference (orange) arms. Bob and Alice's EOM's are Thorlabs LN65S. A Pri-tel FA-18 EDFA injects noise into the channel. We use Yokogawa AQ6370B OSA to collect data at Willie, and a Thorlabs PDB450C balanced detector at Alice.
}
    \label{fig:DESIGN}
\end{figure}

Alice demodulates Bob's BPSK signal using optical homodyne detection. Her output is mixed with the reference arm using a variable $2\times 2$ splitter, which ensures the mixed signals are matched in amplitude before entering the ThorLabs PDB450C balanced detector. We measure its output using a LeCroy SDS120X oscilloscope. We correct phase errors due to, e.g., fiber bends and  polarization mismatches, with a phase-lock loop (PLL). We construct it using an AD835 multiplier and SRS SIM960 proportional-integral-derivative (PID) controller. We emphasize that we use PLL, as in \cite{Hao_2022}, for  convenience. In practical covert communication, the long gaps between transmissions imposed by sparse signaling preclude its use. Instead, we anticipate known pilot symbols to be transmitted with each data symbol to ensure synchronization, as in \cite{bali25covertsdr-milcom, bali26covertsdr}. We defer this to future work.

\begin{figure*}[t]
    \centering
    \subfloat[Normalized transmission probability $\tau \sqrt{n}$]{
        \includegraphics[width=0.32\textwidth]{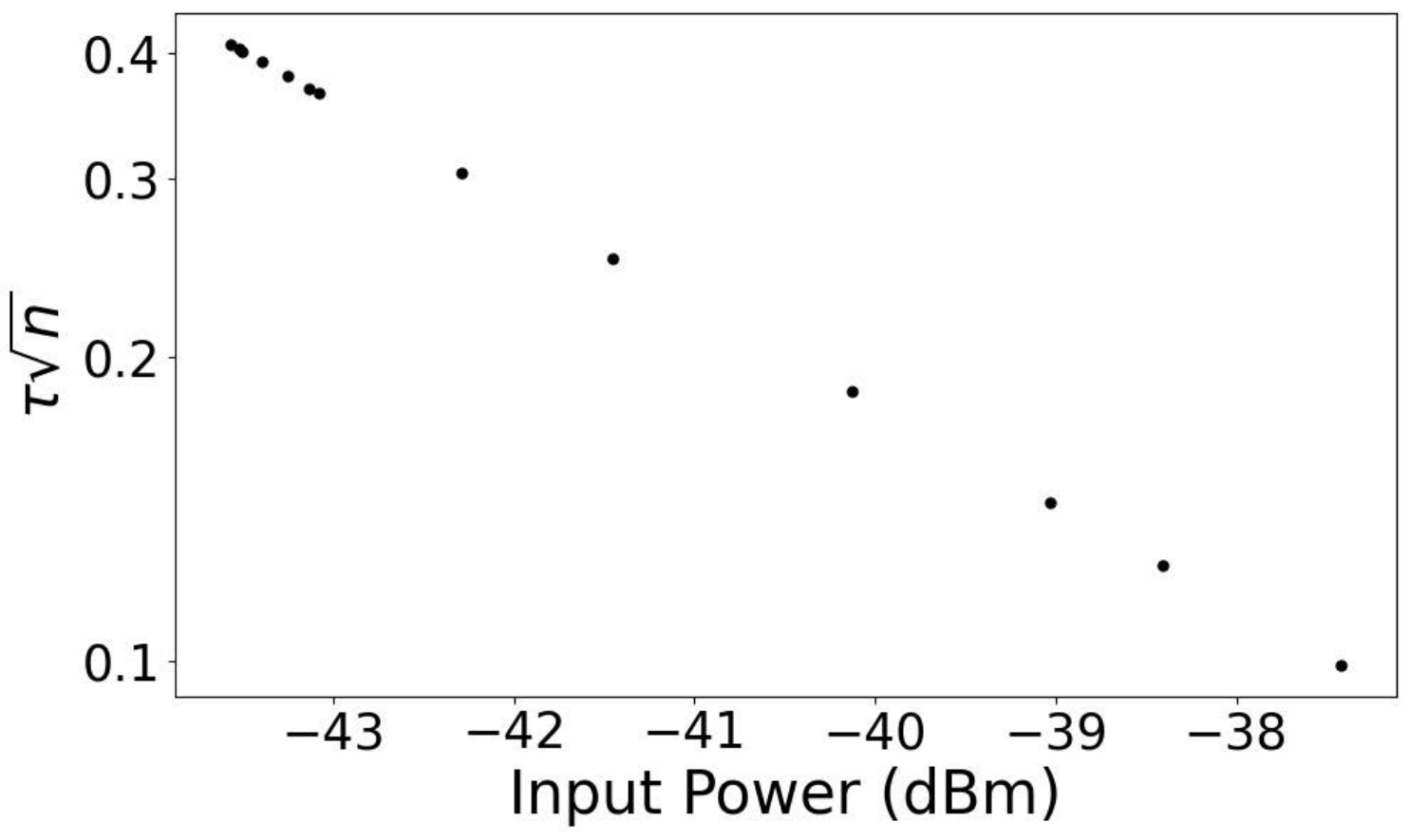}\label{fig:TAU}
        }
        \hfill
    \subfloat[BER and Shannon capacity]{
        \includegraphics[width=0.32\textwidth]{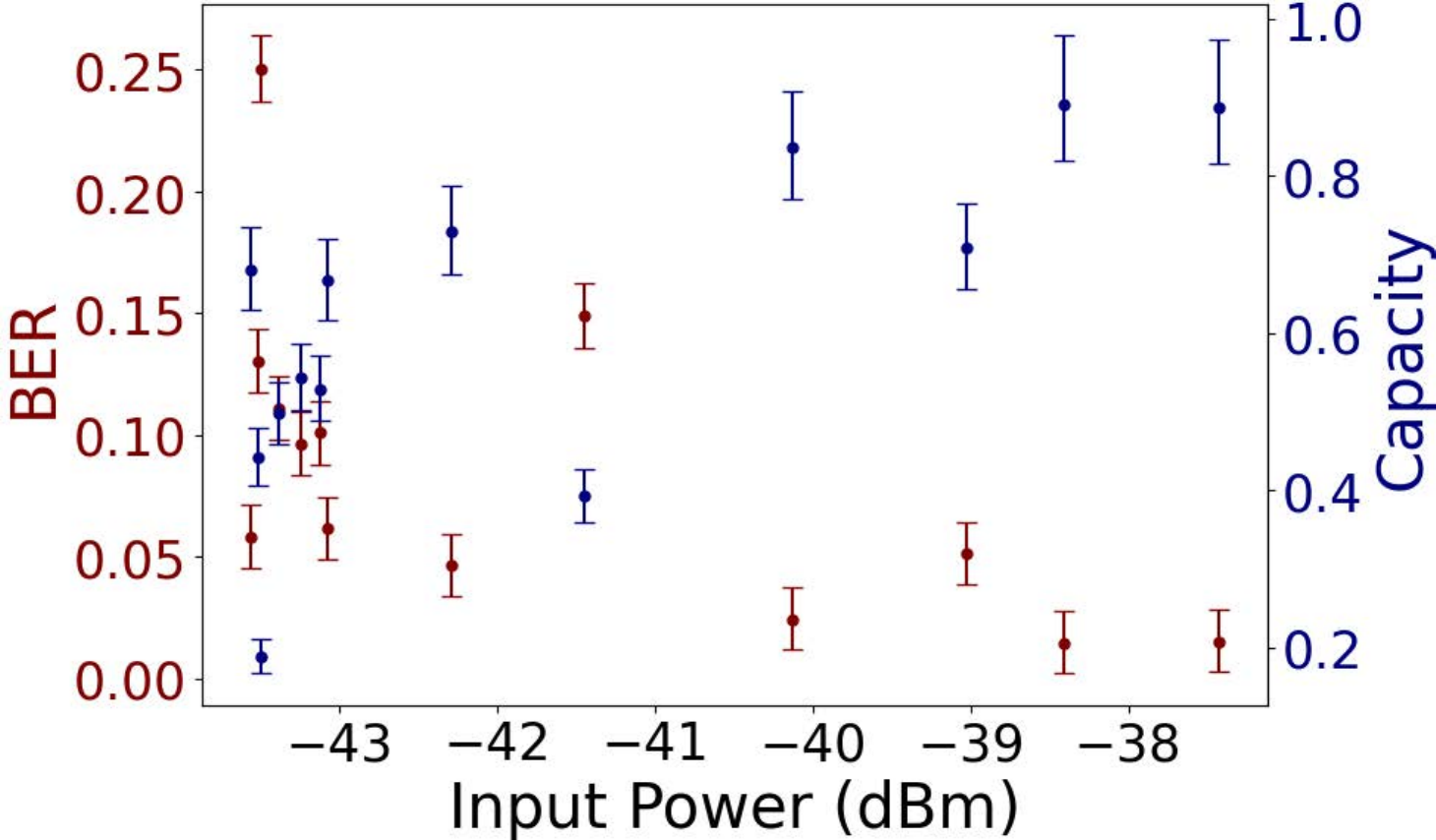} \label{fig:BER}
        }
        \hfill
    \subfloat[Covert capacity lower bound]{\includegraphics[width=0.32\textwidth]{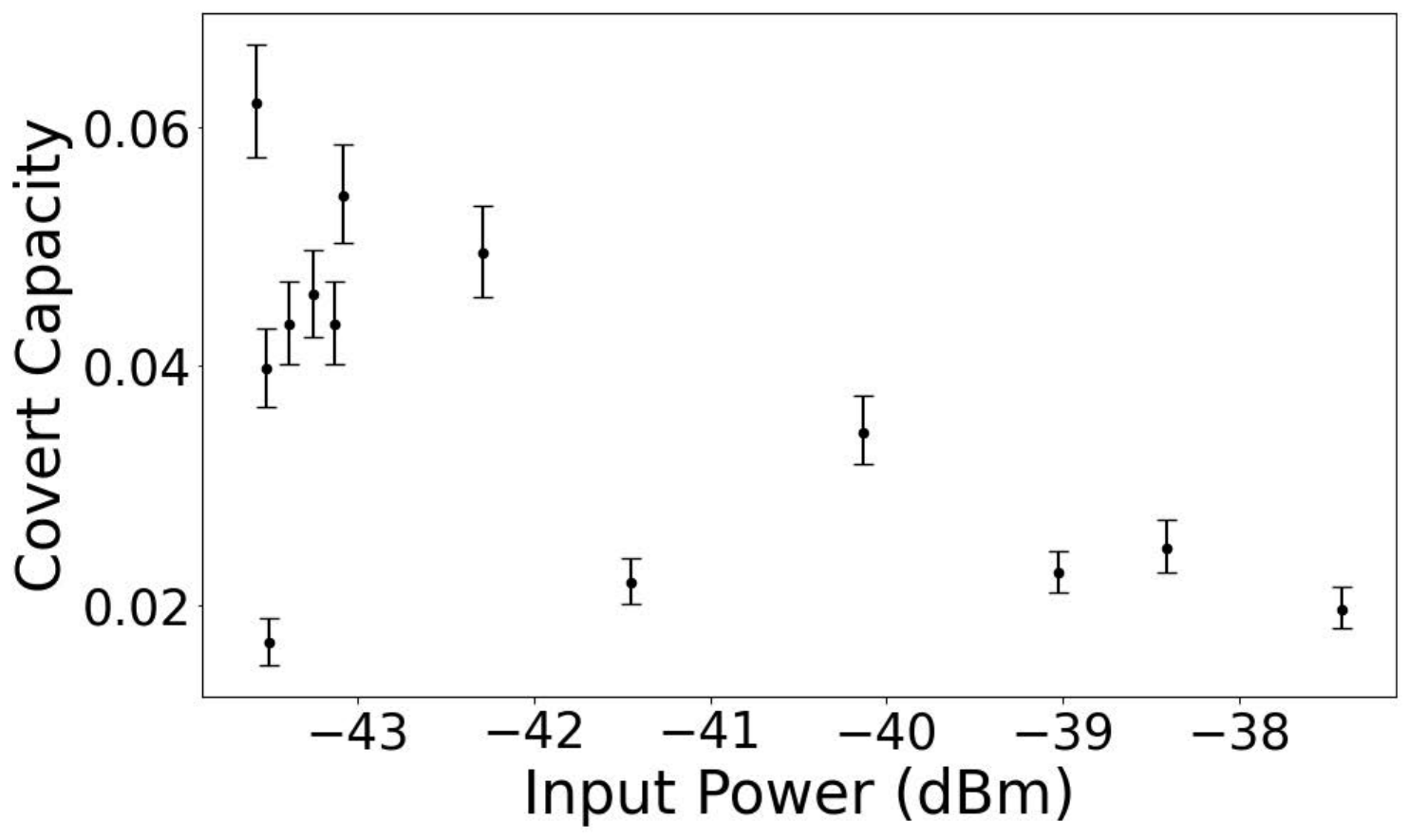}\label{fig:Cov_Cap}
    }
    \caption[Experiment results]{Experimental results discussed in Section \protect\ref{subsec:analysis}. 
    \label{fig:RESULTS}}

\vspace{-10pt}
\end{figure*}

\subsection{Data Collection and Results} 
\label{subsec:analysis}
Our results are in Fig.~\ref{fig:RESULTS}.
We obtain the lower bound on covert capacity $L$ for our design using Corollary \ref{cor:sparse_throughput} and the data from the experiment.
We set $\bar{n}_S^\prime=0$ in Theorem \ref{thm:sparse_signaling} since Alice uses a coherent state and $\eta_1=1$ since loss on the path from Alice to Bob is negligible. Thus, $L\geq  (1-\vartheta)\frac{\sqrt{2\eta_2\bar{n}_{B_2}(1+\eta_2\bar{n}_{B_2})}}{(1-\eta_2)\bar{n}_\alpha}\sqrt{\delta_\text{QRE}}C $ is a function of the mean photon numbers $(1-\eta_2)\bar{n}_\alpha$ and $\eta_2\bar{n}_{B_2}$ that Willie receives from Alice and noise source, respectively, Alice and Bob's tolerance to detection $\delta_{\text{QRE}}=0.05$, and the Shannon capacity $C$ per transmitted symbol.

We employ Theorem \ref{thm:sparse_signaling}'s expression for $\tau$, even though we use BPSK rather than QPSK. This is because, in principle, Alice can create random QPSK signal at Willie by adding and subtracting a random phase on her outgoing and incoming signals. We omit this from our setup for simplicity.
We use the OSA at Willie's output to obtain $\bar{n}_\alpha=\frac{S(\lambda)}{r} \frac{\lambda^3}{hc^2}$, where $S(\lambda)$ is the power at the signal's peak wavelength $\lambda$ from an OSA independently calibrated with a photodiode, $r =0.02$ nm is the resolution bandwidth of the OSA, $h$ is the Plank constant, and $c$ is the speed of light. The noise source is off when we collect the trace for $\bar{n}_T$.  We obtain $\bar{n}_{B_2}$ similarly by collecting the trace from the noise source  with signal off.
We plot $\tau\sqrt{n}$ in Fig.~\ref{fig:TAU}.

We use the oscilloscope at Alice's output to obtain Shannon capacity $C$ as follows: we estimate the means $\{V_0,V_1\}$ and standard deviations $\{\sigma_0,\sigma_1\}$ corresponding to BPSK zero and one pulses. We employ $T=1.288$ ms pulse interval at $10^7$ samples/s rate, and collect $n_d=12,288$ readings for each pulse type.
From these, we estimate Alice's bit error rate $P_e^{(a)}=\frac{1}{2}\erfc\left(\frac{V_1-V_0}{\sqrt{2}(\sigma_1+\sigma_0)}\right)$, where $\erfc(x)=\frac{2}{\sqrt{\pi}}\int_x^\infty e^{-t^2}\dif t$ is the complementary error function \cite[Ch.~4.5]{2007fiber}.
Wilson score yields the $\alpha=95$\% confidence interval $\left(P_e^{(a),-},P_e^{(a),+}\right)$ for $P_e^{(a)}$ \cite[Eq. 8.27]{app_stat}:
        $P_e^{(a),\pm} =\frac{1}{1+z_\alpha^2/n_d}\left(P_e^{(a)}+\frac{z_\alpha^2}{2n_d} \pm \frac{z_\alpha}{2n_d} \sqrt{4n_dP_e^{(a)}(1-P_e^{(a)}+z_\alpha^2)}\right)$,
where $z_\alpha=1.96$ is the critical value.
Then, $C=H_2\left(P_e^{(a)}\right)$, where $H_2(p)=-p\log_2 p-(1-p)\log (1-p)$ is the binary entropy function \cite[Ch.~7.1.4]{cover02IT}, with confidence intervals computed from $\left(P_e^{(a),-},P_e^{(a),+}\right)$.
Fig.~\ref{fig:BER} plots $P_e^{(a)}$ and $C$.

Letting $\vartheta=0$ and multiplying $\tau\sqrt{n}$ from Fig.~\ref{fig:TAU}, $C$ from Fig.~\ref{fig:BER}, and $\sqrt{\delta_{\text{QRE}}}$ yields the $L$ plotted in Fig.~\ref{fig:Cov_Cap}. The input power yielding its maximum is between $-42.33$ and $-40.57\; \text{dBm}$. Thus, increasing input power does not benefit the covert capacity. 

\section{Proposed Correlator-based Receiver}
\label{sec:CCR}
\begin{figure}[htpb]
\centering
    \includegraphics[width=0.45\textwidth]{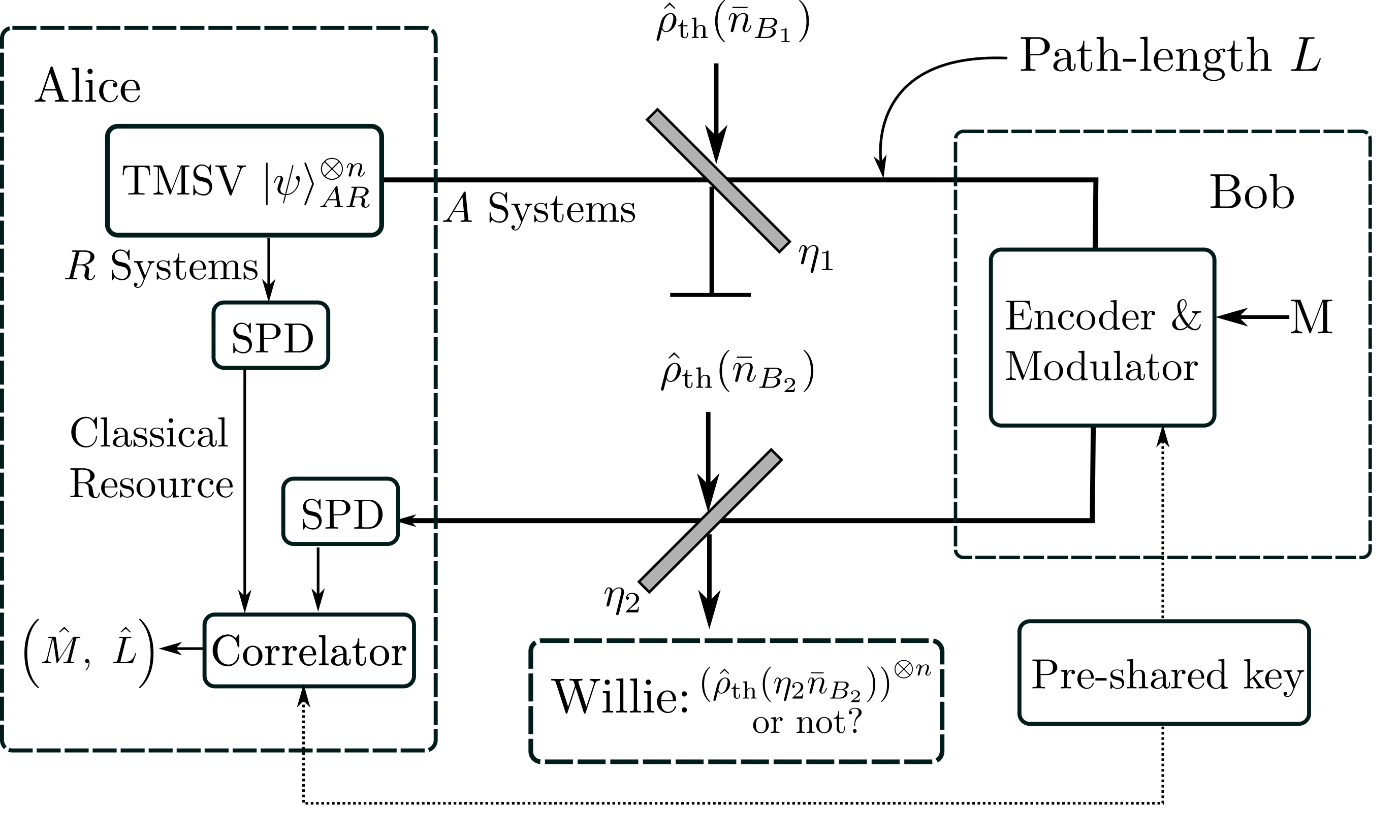}
\caption[Configuration for the proposed correlator-based covert-communication and range-sensing transceiver.]{Configuration for the proposed correlator-based covert-communication and range-sensing transceiver.}
    \label{fig:tmsv_spd_diagram}
    \vspace{-5pt}
\end{figure}

Now we propose a phase-insensitive receiver, depicted in Fig.~\ref{fig:tmsv_spd_diagram}, as a practical alternative to the broadband schemes based on \cite{bash19covertsensor, gagatsos19floodlightsensorcleo, gagatsos19floodlightsensor, Hao_2022}.
Suppose that Bob and Alice have knowledge of the upper and lower bound $l_l$ and $l_u$ of an unknown path length $l$. 

 Alice prepares a product of $n$ two-mode squeezed vacuum (TMSV) states, $\ket{\psi}_{AR}^{\otimes n}=\sum_{k=1}^\infty \sqrt{\frac{\bar{n}^k}{(\bar{n}-1)^{k+1}}}\ket{k}_A\ket{k}_R$ which describes a maximally-entangled state of signal and reference systems $A$ and $R$. The $n$ reference systems are immediately measured by Alice using a single photon detector (SPD), generating an $n$-length i.i.d.~binary sequence, $\mathbf{D}^{(i)}$. The $n$ signal systems are transmitted over channel 1 to Bob. Using the preshared secret, Bob uses on-off keying (OOK) modulation with probability of transmitting an on-symbol $p_{\text{OOK}}$ on each of $N_B = (1-\vartheta)\tau (n/M) = \lfloor \tau (n/M)\rfloor$ slots, where $\tau$ is from Theorem \ref{thm:sparse_signaling}, and discards the rest. 
 
 Suppose that Alice knows $l$. She turns the second SPD on when she expects the light to arrive, generating a second $n$-long binary sequence, $\mathbf{D}^{(s)}$. Detection events from both sequences corresponding to the unused symbol slots are discarded, yielding the subselected sequences, $\mathbf{D}^{*(i)}$ and $\mathbf{D}^{*(s)}$. Alice computes the classical correlation $C_u = \sum_{k=1}^M \mathbf{D}^{*(i)}_{k,u}\mathbf{D}^{*(s)}_{k,u}$ on each symbol slot $u \in 1,...,N_B$. Note that $C_u$ is a binomial distributed random variable with $M$ trials and a probability of success that depends on $p_{\text{OOK}}$. Alice demodulate each symbol using $C_u$. This results in a binary asymmetric channel, characterized in the Appendix \ref{ap:ccr_performance}. 

If Alice does not know $l$, she uses $\mathbf{D}^{(i)}$ and $\mathbf{D}^{(s)}$ to determine it without altering the overall transmission strategy, i.e., \emph{without impacting covertness or information throughput}. She subdivides the difference between $l_l$ and $l_u$ into $N_R$ discrete bins of the mode-length, $c/W$, where $c$ is the speed of light and $W$ is  the transmission bandwidth. For $v\in \{i,s\}$, Alice takes the subselected sequence, $\mathbf{D}^{*(v)}$, and pads each symbol by the $M+N_R$ elements of $\mathbf{D}^{(v)}$ surrounding the symbol via $l_l$ and $l_u$, yielding an $N=(N_R+2M)N_B$ -long sequence, $\mathbf{D}_p^{(v)}$. By taking the argmax of the cross-correlation between $\mathbf{D}_p^{(i)}$ and $\mathbf{D}_p^{(s)}$, Alice can recover $l$ within $c/W$ based on the chosen index. Appendix \ref{ap:ccr_performance} proves that this results in the correct $l$ with certainty as $n \rightarrow\infty$. 
Note that this method has been experimentally verified in the context of clock synchronization over pure-loss fiber channels \cite{Lee_2022}. 

We compare covert throughput of the following schemes:
\begin{enumerate}
    \item Our correlator-based proposal discussed above.
    \item TMSV with PCR experimentally demonstrated in \cite{Hao_2022}, assuming a receiver gain coefficient of $1+0.257\times10^{-3}$.
    \item Thermal with homodyne receiver also demonstrated in \cite{Hao_2022}, assuming a resource power of $6\times10^3$ photons per mode ($\sim1$ mW for the broadband signal).
    \item Coherent OOK with PNR, experimentally demonstrated with SPDs in \cite{bash15covertbosoniccomm}.
    \item Coherent BPSK with homodyne receiver in Section \ref{sec:Experiment_Setup}.
\end{enumerate}
Schemes $\{1,4\}$ and $\{2, 3, 5\}$ modulate, respectively, intensity and phase. Thus, our channel 1 is lossless to not favor the former.
We compute the induced binary channel capacities for the following model.
We use the MODTRAN ``Mid-Latitude Summer (MLS)'' atmospheric profile \cite{berk06MODTRAN}, with terminals 10 m above ground level and 23 km clear-weather visibility and 1~km link distance. We compute $\bar{n}_{B_2}$ from the total radiance at a solar elevation of $60\degree$.
We set transmission time $T=100$ s,  mean photons per mode $\bar{n}_s'=10^{-4}$, modulation frequency $W_M=100$ MHz, and signal bandwidth $W=1$ THz for broadband schemes, 1, 2, and 3. 

\begin{figure}[htpb]
    \includegraphics[width=0.48\textwidth]{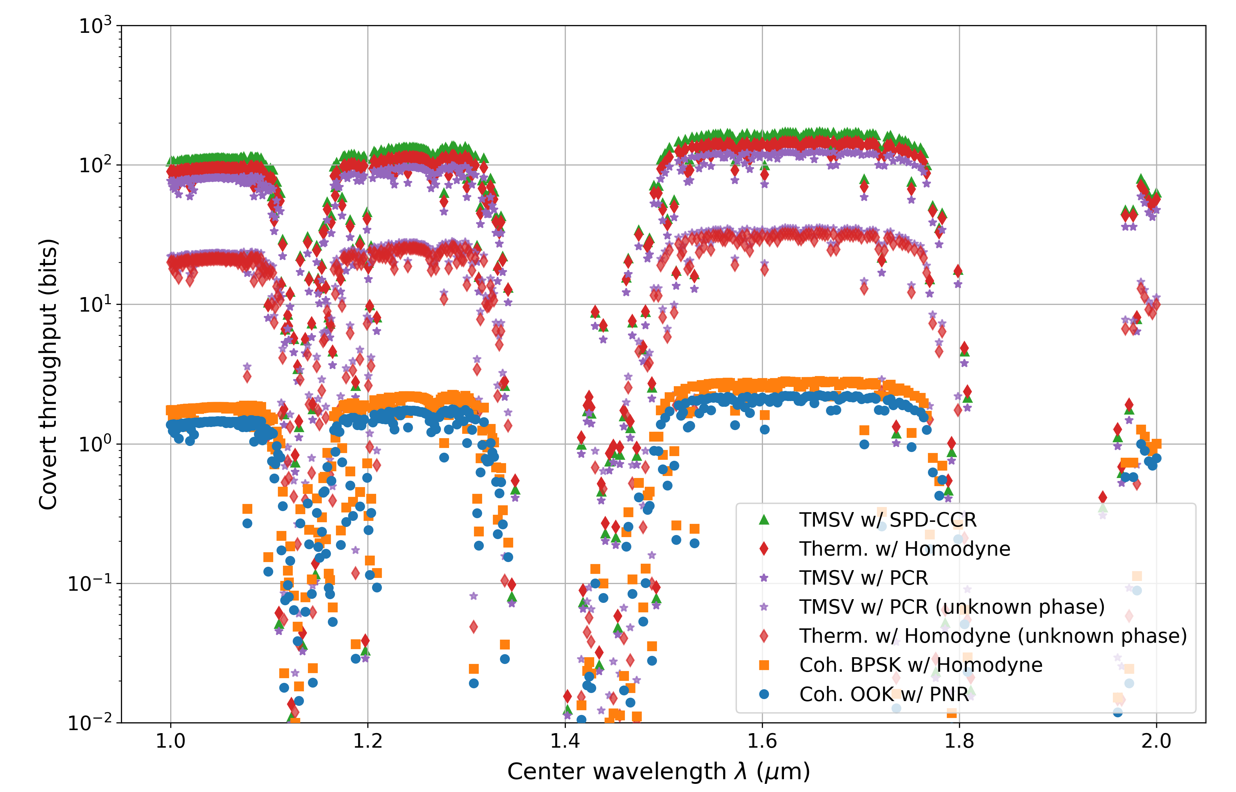}
    \caption[Simulated throughput of the two-way covert communications schemes.]{Simulated throughput of the two-way covert communications schemes described in Section \protect\ref{sec:CCR}.
     \label{fig:RESULTS-CCR}}
\end{figure}

Fig.~\ref{fig:RESULTS-CCR} plots the total covert throughput of all schemes for center wavelengths $\lambda\in[1,2]~\mu$m. The broadband schemes 2 and 3 are phase-sensitive. We consider two scenarios: a) a genie gives Alice the exact phase; and b) phase is stable but unknown in each symbol slot. To estimate the phase in scenario (b), we partition each symbol period into $M_1$-mode phase-sensing and $M_2$-mode data segments, $M=M_1+M_2$, as in \cite{bali25covertsdr-milcom, bali26covertsdr}. We assume optimal phase measurements that achieve quantum Cram\'er-Rao bounds (QCRBs) for schemes 2 and 3. These QCRBs are derived in \cite{Hao_2022}. We maximize the induced channel capacity over $(M_1,M_2)$ using the Monte Carlo simulations for each value of center wavelength $\lambda$. 
Fig.~\ref{fig:RESULTS-CCR} shows that our correlator-based scheme outperforms the others, even when the channel phase is known a priori by schemes 2 and 3. When it is unknown, our scheme performs far better. In \cite{bullock25thesis}, we show that our scheme also performs well with diffuse signaling.
Finally, evaluating the probability $p_s$ of successfully determining the Alice-to-Bob distance $l$ to within $c/W$ is computationally onerous for large $n$. However, we find that, for $(W, T, W_M)$ set as above, $\eta_2=0.8$, $\bar{n}_{B_2}=10^{-5}$, and $l$ known to within $30$ meters prior to transmission (i.e., $l_u-l_l=30$ m), $p_s\geq 1-10^{-20}$.

\section{Conclusion, Discussion, and Future Work} \label{sec:discussion}
We model and analyze a two-way covert communication system based on wide-band covert-sensing ideas \cite{bash19covertsensor, gagatsos19floodlightsensorcleo, gagatsos19floodlightsensor, Hao_2022}. The practicality of these systems is limited by the need for continual precise path length measurement between users. Therefore, we employ narrow-band laser light in our experimental proof-of-concept for two-way covert communication. Furthermore, we offer a solution that attains the wide-band performance gains without the path-length measurement overhead. We will realize it experimentally in the future work.

\section*{Acknowledgment}

We benefited from discussions with Christos N.~Gagatsos, Matthieu R.~Bloch, and Robert A.~Norwood.

\bibliographystyle{IEEEtran}
\bibliography{papers}

\appendices
\section{Proof of Theorem \ref{thm:sparse_signaling}} \label{ap:covertness}
Here, we derive the bound given in Theorem \ref{thm:sparse_signaling}. To use Pinsker's inequality as discussed in Section \ref{sec:covertnesscriterion}, we must evaluate the QRE between the $M$-length block mixed state received by Willie and the $M$-length block innocent thermal state,
\begin{equation} \label{QRE}
\begin{split}
    D(\hat{\rho}_1^{W^M} ||\hat{\rho}_0^{W^M})& =
     \text{Tr}(\hat{R})
\end{split}
\end{equation}
where we have defined an operator, $\hat{R}=\hat{\rho}_1^{W^M}\log\hat{\rho}_1^{W^M}-\hat{\rho}_1^{W^M}\log\hat{\rho}_0^{W^M}$. To estimate the QRE, we need to find the Taylor series expansion of $\hat{R}$ when the Bob's return signal power is equal to 0 , $\bar{n}_T=0$. 

Note that the following analysis represents the input displacement as spread over all $M$ modes rather than localized to a single mode as in Section \ref{subsec:input_signal}, amounting to passive unitary transformations. Willie intercepts the following mixed state for one symbol-slot transmission, 
\begin{equation} \label{mixed}
\begin{split}
   \hat{\rho}_1^{W^M}(\theta)&=(1-\tau)\hat{\rho}_{\text{th}}(\eta_2\bar{n}_{B_2})^{\otimes M}\\&\phantom{=}+\tau\hat{\rho}_{\text{th}}(\bar{n}_W,\alpha_\theta)\otimes\hat{\rho}_{\text{th}}(\bar{n}_W)^{\otimes M-1}\\&= (1-\tau)\hat{\rho}_{\bar{n}_0}^{\otimes M}
   + 
   \tau \hat{\rho}_Q^{(M)}
\end{split}
\end{equation}
where $\hat{\rho}_Q^{(M)} = \frac{1}{4} \left(\rho_{00} +\rho_{01}+\rho_{10}+\rho_{11}\right)$ is the non-innocent QPSK state, and $\hat{\rho}_{\bar{n}_0}=\hat{\rho}_\text{th}(\eta_2\bar{n}_{B_2})$ with $\bar{n}_0=\eta_2\bar{n}_{B_2}$. The individual QPSK modulated states are defined as $\hat{\rho}_{00}\equiv  \hat{\rho}_\text{th}(\bar{n}_W,\alpha_{\theta=0}/ \sqrt{M})^{\otimes M}$, $\hat{\rho}_{01}\equiv  \hat{\rho}_\text{th}(\bar{n}_W,\alpha_{\theta=\pi/2}/\sqrt{M})^{\otimes M}$, $\hat{\rho}_{10}\equiv  \hat{\rho}_\text{th}(\bar{n}_W,\alpha_{\theta=\pi}/\sqrt{M})^{\otimes M}$, $\hat{\rho}_{11}\equiv  \hat{\rho}_\text{th}(\bar{n}_W,\alpha_{\theta=3\pi/2}/\sqrt{M})^{\otimes M}$. 
Recall from Section \ref{subsec:input_signal}, $\bar{n}_T= \bar{n}_\alpha/M + \bar{n}_{S}'$ is the total mean photon number per mode of Bob's transmitted state. We parametrize this by the following ratio, $x\in[0,1]$ such that $\bar{n}_\alpha/M=(1-x)\bar{n}_T$ and $\bar{n}_{S}'=x\bar{n}_T$, allowing the displacement and thermal mean photon number of Willie's input to be expressed as $\alpha_\theta/\sqrt{M}=\sqrt{(1-\eta_2)(1-x)\bar{n}_T}e^{j\theta}$ and $\bar{n}_w=(1-\eta_2)x\bar{n}_T +\eta_2\bar{n}_{B_2}$. 

Let $\hat{\rho}_\tau^{(M)}=\hat{\rho}_1^{W^M}(\theta)$ and $b=\sqrt{\bar{n}_T}$ as differentiation with respect to $\bar{n}_T$ and evaluation at $\bar{n}_T=0$ leads to a singularity. To proceed, we need the first four derivatives of $\hat{\rho}_1^{W^M}(\theta)$ with respect to $b$,
\begin{equation} \label{eq:rho_tau}
\begin{split}
   \frac{\dif^n}{\dif{b}^n}\hat{\rho}_\tau^{(M)} =  
   \frac{\tau}{4}\left( \frac{\dif^n}{\dif{b}^n}\hat{\rho}_{00} + \frac{\dif^n}{\dif{b}^n}\hat{\rho}_{01} + \frac{\dif^n}{\dif{b}^n}\hat{\rho}_{10} + \frac{\dif^n}{\dif{b}^n}\hat{\rho}_{11}\right).
\end{split}
\end{equation}

Let $\rho_\theta^{(M)}=\hat{\rho}_\text{th}(\bar{n}_W,\alpha_{\theta}/ \sqrt{M})^{\otimes M}$. Note that \eqref{eq:rho_tau} involves the summation of derivatives of $\rho_\theta^{(M)}$, therefore we proceed as such, 
\begin{equation} 
\begin{split}
   \frac{\dif}{\dif b}&\rho_\theta^{(M)} 
   =\;\frac{\dif}{\dif b}\bigotimes^M_{k=1}\left( \int_\mathbf{} \dif ^2\beta \frac{\exp[\frac{-|\beta_k-\alpha_\theta|^2}{\bar{n}_W}]}{\pi \bar{n}_W} \ket{\beta_k}\bra{\beta_k} \right)    \\
    =\;&  \int_\mathbf{} \dif ^2\beta_1...\int_\mathbf{} \dif ^2\beta_M\frac{\dif}{\dif b}\left(\prod_{k=1}^M  \frac{\exp[\frac{-|\beta_k-\alpha_\theta|^2}{\bar{n}_W}]}{\pi \bar{n}_W} \right) \\
&\;\;\;\;\;\;\;\;\;\;\;\;\;\;\;\;\;\;\; \;\;\;\;\;\;\;\;\;\;\;\;\;\;\;\;\;\;\;\; \;\;\;\;\;\;\;\;\;\bigotimes^M_{k=1}\ket{\beta_k}\bra{\beta_k}      \\
    =\;&  \int_\mathbf{} \dif ^2\beta_1...\int_\mathbf{} \dif ^2\beta_M\sum_{k=1}^M\left(A^{(1)}_{k,b}\prod_{j=1,\;j\neq k}^M  A^{(0)}_{j,b} \right)
    \\
    &\;\;\;\;\;\;\;\;\;\;\;\;\;\;\;\;\;\;\; \;\;\;\;\;\;\;\;\;\;\;\;\;\;\;\;\;\;\;\; \;\;\;\;\;\;\;\;\;\;\;\;\;\;\;\bigotimes^M_{k=1}\ket{\beta_k}\bra{\beta_k}
\end{split}
\end{equation}
where $A^{(0)}_{k,b}=\frac{1}{\pi \bar{n}_W}\exp[\frac{-|\beta_k-\alpha_\theta|^2}{\bar{n}_W}]=\frac{1}{\pi \bar{n}_W}e_k(b)$, and

\begin{equation} 
\begin{split}
    A^{(1)}_{k,b} &= \frac{\dif}{\dif b}\left(A^{(0)}_{k,b}\right) = \frac{\dif}{\dif b}\left[ \frac{1}{\pi \bar{n}_W}e_k(b) \right] \\
    &= A^{(0)}_{k,b} \left\{ \frac{-2b(1-\eta_2)x}{\bar{n}_W} + \frac{\dif}{\dif b}\left[\frac{-|\beta_k-\alpha_\theta|^2}{\bar{n}_W}\right] \right\}. 
\end{split}
\end{equation}
The derivative in the 2nd term evaluates to 
\begin{equation} 
\begin{split}
\frac{\dif}{\dif b}&\left[\frac{-|\beta_k-\alpha_\theta|^2}{\bar{n}_W}\right]  
=\frac{1}{\bar{n}_W}\Bigg\{\frac{2b(1-\eta_2)x}{ \bar{n}_W}|\beta_k-\alpha_\theta|^2  \\ &-  2b(1-\eta_2)(1-x) + 
2\sqrt{(1-\eta_2)(1-x)}\Re\{\beta_ke^{-j\theta}\}\Bigg\}.
\end{split}
\end{equation}
By defining $B(b)= \frac{1}{\bar{n}_W}\Big(-2(1-\eta_2)b +
\frac{ 2b(1-\eta_2)x}{ \bar{n}_W}|\beta_k-\alpha_\theta|^2+ \sqrt{(1-\eta_2)(1-x)}\left(\beta_ke^{-j\theta}  + \beta_k^*e^{j\theta} \right)\Big)$, we obtain,
\begin{equation} 
\begin{split}
\frac{\dif}{\dif b}\rho_\theta^{(M)} 
&=
 \int_\mathbf{} \dif ^2\beta_1...\int_\mathbf{} \dif ^2\beta_M\\
 &\quad\quad\quad\bigtimes\sum_{k=1}^MB(b)\prod_{j=1}^M  A^{(0)}_{j,b}\bigotimes^M_{m=1}\ket{\beta_m}\bra{\beta_m} \\
&=
\sum_{k=1}^M   
2c_1 \frac{xb}{\bar{n}_W^2}\big(\hat{a}_k -\alpha_\theta\big) \hat{\rho}^{(M)}_\theta  \big( \hat{a}_k^\dagger -\alpha^*(b)\big)\\
&\;\;\;\; +  
\frac{1}{\bar{n}_W}\big(c_2 e^{-j\theta}\hat{a}_k -c_1b\big)\hat{\rho}^{(M)}_\theta \\
&\;\;\;\;+ 
\frac{1}{\bar{n}_W}\hat{\rho}^{(M)}_\theta \big(c_2 e^{j\theta}\hat{a}_k^\dagger -c_1b\big)  
\end{split}
\end{equation}
where we use the fact that $\hat{\rho}^{(M)}_\theta = \int_\mathbf{} \dif ^2\beta_1...\int_\mathbf{} \dif ^2\beta_M \bigotimes^M_{j=1}A^{(0)}_{j,b}  \ket{\beta_j}\bra{\beta_j}$.
Here,  $c_1=1-\eta_2$, $c_2=\sqrt{(1-\eta_2)(1-x)}$. Setting $b=0$ yields,
\begin{equation}\label{eq:rho_theta_1}
\begin{split}
     \frac{\dif}{\dif b}\hat{\rho}_\theta^{(M)}\bigg\vert _{b=0}
     &= 
\sum_{m=1}^M \frac{c_2}{\bar{n}_0} \bigg\{ e^{-j\theta}\hat{a}_m \hat{\rho}_{\bar{n}_0}^{\otimes M} + e^{j\theta}\hat{\rho}_{\bar{n}_0}^{\otimes M}  \hat{a}_m^\dagger \bigg\} .
\end{split}
\end{equation}

Moving onto the second derivative of $\hat{\rho}_\theta^{(M)}$ with respect to $b$,        
\begin{equation} 
\begin{split}
\frac{\dif^2}{\dif b^2}&\rho_\theta^{(M)}
= 
\sum_{m=1}^M 
\frac{2c_1x}{\bar{n}_W^2}\big(\hat{a}_m -\alpha_\theta\big) \hat{\rho}^{(M)}_\theta  \big( \hat{a}_m^\dagger -\alpha^*(b)\big)\\
&\;\;\;\;
- \frac{2(2c_1x)^2b^2}{\bar{n}^3(b)}\big(\hat{a}_m -\alpha_\theta\big) \hat{\rho}^{(M)}_\theta  \big( \hat{a}_m^\dagger -\alpha^*(b)\big)  \\
&\;\;\;\;
- \frac{2c_1c_2xb}{\bar{n}_W^2} \Big[e^{j\theta} \hat{\rho}^{(M)}_\theta  \big( \hat{a}_m^\dagger -\alpha^*(b)\big)   \\
&\;\;\;\;\;\;\;\;\;\;\;\;\;\;\;\;\;\;\;\;\;\;\;\;\;\;
+ e^{-j\theta}\big(\hat{a}_m -\alpha_\theta\big) \hat{\rho}^{(M)}_\theta   \Big] \\
&\;\;\;\;
+\frac{2c_1xb}{\bar{n}_W^2}\big(\hat{a}_m -\alpha_\theta\big) \frac{\dif}{\dif b}\hat{\rho}_\theta^{(M)}  \big( \hat{a}_m^\dagger -\alpha^*(b)\big)  \\
&\;\;\; \;
- \frac{2c_1xb}{\bar{n}_W^2}\Big[\left(c_2e^{-j\theta}\hat{a}_m-c_1b\right)\hat{\rho}_\theta^{(M)}\\
&\;\;\;\;\;\;\;\;\;\;\;\;\;\;\;\;\;\;\;\;\;\;\;\;\;\; + \hat{\rho}_\theta^{(M)}\left(c_2e^{j\theta}\hat{a}^\dagger_m-c_1b\right) \Big]\\
&\;\;\; \;
+ \frac{1}{\bar{n}_W}\Big[\left(c_2e^{-j\theta}\hat{a}_m-c_1b\right)\frac{\dif}{\dif b}\hat{\rho}_\theta^{(M)} +\\
&\;\;\;\;\;\;\;\;\;\;\;\;\;\;\;\;\;\;\;\;\;\;\;\;\;\;
\frac{\dif}{\dif b}\hat{\rho}_\theta^{(M)}\left(c_2e^{j\theta}\hat{a}^\dagger_m-c_1b\right) \Big]\\
&\;\;\;\;
-2\frac{c_1}{\bar{n}_W}\hat{\rho}_\theta^{M} 
\end{split}
\end{equation}
Setting $b=0$ yields
\begin{equation} \label{eq:rho_theta_2}
\begin{split}
&\frac{\dif ^2}{\dif b^2}\rho_\theta^{(M)} \bigg\vert _{b=0}
=
\sum_{m=1}^M \frac{1}{\bar{n}_0} \bigg\{ \frac{2c_1x}{\bar{n}_0}\hat{a}_m \hat{\rho}_{\bar{n}_0}^{\otimes M}   \hat{a}_m^\dagger 
- 2c_1\hat{\rho}_{\bar{n}_0}^{\otimes M}
  \\
&\;\;\;\;\;\;\;\;
+ c_2 \Big[ e^{-j\theta}\hat{a}_m \frac{\dif }{\dif b}\hat{\rho}_\theta^{(M)}\bigg\vert _{b=0} + e^{j\theta} \frac{\dif }{\dif b}\hat{\rho}_\theta^{(M)}\bigg\vert _{b=0} \hat{a}_m^\dagger \Big] \bigg\}\\
&\;\;\;\;=\sum_{{m_1},m_2=1}^M  \frac{2c_1x}{ \bar{n}_0^{2}}\hat{a}_{m_1} \hat{\rho}_{\bar{n}_0}^{\otimes M}   \hat{a}_{m_1}^\dagger 
- \frac{2c_1}{\bar{n}_0}\hat{\rho}_{\bar{n}_0}^{\otimes M}
  \\
&\;\;\;\;\;\;\;\;
+ \frac{c_2^2}{\bar{n}_0^{2}}        \Big[ e^{-2j\theta}\hat{a}_{m_1}\hat{a}_{m_2}\hat{\rho}_{\bar{n}_0}^{\otimes M} + e^{2j\theta}\hat{\rho}_{\bar{n}_0}^{\otimes M} \hat{a}_{m_2}^\dagger\hat{a}_{m_1}^\dagger\\
&\;\;\;\;\;\;\;\;\;\;\;\;\;\;\;\;\;\;\;\;\;
+ 2\hat{a}_{m_2}\hat{\rho}_{\bar{n}_0}^{\otimes M}\hat{a}_{m_1}^\dagger \Big].\\
\end{split}
\end{equation}

The third derivative of $\hat{\rho}_\theta^{(M)}$ with respect to $b$ is as follows, 
\begin{equation} 
\begin{split}
\frac{\dif ^3}{\dif b^3}&\rho_\theta^{(M)}
= 
\sum_{m=1}^M 
\frac{-6(2c_1x)^2b}{\bar{n}^3(b)} \\
&\;\;\;\;\;\;\;\;\;\;\bigtimes\big(\hat{a}_m -\alpha_\theta\big) \hat{\rho}^{(M)}_\theta  \big( \hat{a}_m^\dagger -\alpha^*(b)\big)\\
&\;\;\;\;\;\;\;
+ 
\frac{4c_1x}{\bar{n}_W^2}\big(\hat{a}_m -\alpha_\theta\big) \frac{\dif }{\dif b}\hat{\rho}_\theta^{(M)}  \big( \hat{a}_m^\dagger -\alpha^*(b)\big)\\
&\;\;\;\;\;\;\;
- \frac{4c_1c_2x}{\bar{n}_W^2} \bigg[e^{j\theta} \hat{\rho}^{(M)}_\theta  \big( \hat{a}_m^\dagger -\alpha^*(b)\big)\\&
\;\;\;\;\;\;\;\;\;\;\;\;\;\;\;\;\;\;\;\;\;\;\;\;\;\;\;\;\;\;\;\;
+ e^{-j\theta}\big(\hat{a}_m -\alpha_\theta\big) \hat{\rho}^{(M)}_\theta   \bigg] \\
&\;\;\;\;\;\;\;
+\frac{4c_1xb}{\bar{n}_W^2}\left( c_2^2+2c_1\right)\hat{\rho}_\theta^{(M)}  \\
&\;\;\;\;\;\;\;
+ 
\frac{2c_1xb}{\bar{n}_W^2}\big(\hat{a}_m -\alpha_\theta\big) \frac{\dif ^2}{\dif b^2}\hat{\rho}_\theta^{(M)}  \big( \hat{a}_m^\dagger -\alpha^*(b)\big)\\
&\;\;\;\;\;\;\;
-
\frac{4c_1c_2xb}{\bar{n}_W^2} \bigg[e^{j\theta} \frac{\dif }{\dif b}\hat{\rho}^{(M)}_\theta  \big( \hat{a}_m^\dagger -\alpha^*(b)\big) \\&
\;\;\;\;\;\;\;\;\;\;\;\;\;\;\;\;\;\;\;\;\;\;\;\;\;\;\;\;\;\;\;\;         + e^{-j\theta}\big(\hat{a}_m 
-\alpha_\theta\big) \frac{\dif }{\dif b}\hat{\rho}^{(M)}_\theta   \bigg] \\
&\;\;\; \;\;\;\;
- \frac{2c_1x}{\bar{n}_W^2}\bigg[\left(c_2e^{-j\theta}\hat{a}_m-c_1b\right)\hat{\rho}_\theta^{(M)}\\&
\;\;\;\;\;\;\;\;\;\;\;\;\;\;\;\;\;\;\;\;\;\;\;\;\;\;\;\;\;\;\;\;
 + \hat{\rho}_\theta^{(M)}\left(c_2e^{j\theta}\hat{a}^\dagger_m-c_1b\right) \bigg]\\
&\;\;\; \;\;\;\;
- \frac{4c_1xb}{\bar{n}_W^2}\bigg[\left(c_2e^{-j\theta}\hat{a}_m-c_1b\right)\frac{\dif }{\dif b}\hat{\rho}_\theta^{(M)}\\&
\;\;\;\;\;\;\;\;\;\;\;\;\;\;\;\;\;\;\;\;\;\;\;\;\;\;\;\;\;\;\;\;
 + \frac{\dif }{\dif b}\hat{\rho}_\theta^{(M)}\left(c_2e^{j\theta}\hat{a}^\dagger_m-c_1b\right) \bigg]\\
&\;\;\; \;\;\;\;
+ \frac{1}{\bar{n}_W}\bigg[\left(c_2e^{-j\theta}\hat{a}_m-c_1b\right)\frac{\dif ^2}{\dif b^2}\hat{\rho}_\theta^{(M)} \\&
\;\;\;\;\;\;\;\;\;\;\;\;\;\;\;\;\;\;\;\;\;\;\;\;\;\;\;\;\;\;\;\;
+ \frac{\dif ^2}{\dif b^2}\hat{\rho}_\theta^{(M)}\left(c_2e^{j\theta}\hat{a}^\dagger_m-c_1b\right) \bigg]\\
&\;\;\;\;\;\;\;
-\frac{4c_1}{\bar{n}_W}\frac{\dif }{\dif b}\hat{\rho}_\theta^{M} + o(b).\\
\end{split}
\end{equation}\\
Note that higher order terms of $b$ have zero contribution when $b=0$. Evaluating at $b=0$,
\begin{equation} \label{eq:rho_theta_3}
\begin{split}
\frac{\dif ^3}{\dif b^3}&\rho_\theta^{(M)}\bigg\vert _{b=0}
=
\sum_{m=1}^M
\frac{4c_1x}{\bar{n}_0^2}\hat{a}_m \frac{\dif \hat{\rho}_\theta^{(M)}}{\dif b} \bigg\vert _{b=0} \hat{a}_m^\dagger\\
&\;\;\;\;-
\frac{6c_1c_2x}{\bar{n}_0^{2}}\left[  e^{-j\theta} \hat{a}_m \hat{\rho}_{\bar{n}_0}^{\bigotimes M} + e^{j\theta}\hat{\rho}_{\bar{n}_0}^{\bigotimes M}\hat{a}_m^\dagger\right] \\
&\;\;\;\;+ 
\frac{c_2}{\bar{n}_0}\left[e^{-j\theta} \hat{a}_m \frac{\dif ^2\hat{\rho}_\theta^{(M)}}{\dif b^2}\bigg\vert _{b=0} + e^{j\theta}\frac{\dif ^2\hat{\rho}_\theta^{(M)}}{\dif b^2}\bigg\vert _{b=0}\hat{a}_m^\dagger \right]\\
&\;\;\;\;-
\frac{4c_1}{\bar{n}_0} \frac{\dif \hat{\rho}_\theta^{(M)}}{\dif b} \bigg\vert _{b=0}\\
\end{split}
\end{equation}

Moving onto the fourth derivative of $\hat{\rho}_\theta^{(M)}$ with respect to $b$, 
\begin{equation} 
\begin{split}
\frac{\dif ^4}{\dif b^4}&\rho_\theta^{(M)}
= 
\sum_{m=1}^M 
\frac{-24c_1^2x^2}{\bar{n}^3(b)}\big(\hat{a}_m -\alpha_\theta\big) \hat{\rho}^{(M)}_\theta  \big( \hat{a}_m^\dagger -\alpha^*(b)\big)\\
&
+ 
\frac{6c_1x}{\bar{n}_W^2}\big(\hat{a}_m -\alpha_\theta\big) \frac{\dif ^2}{\dif b^2}\hat{\rho}_\theta^{(M)}  \big( \hat{a}_m^\dagger -\alpha^*(b)\big)\\
&
- \frac{12c_1c_2x}{\bar{n}_W^2} \bigg[e^{j\theta} \frac{\dif }{\dif b}\hat{\rho}^{(M)}_\theta  \big( \hat{a}_m^\dagger -\alpha^*(b)\big)          \\& \;\;\;\;\;\;\;\;\;\;\;\;\;\;\;\;\;\;\;\;\;\;\;
+ e^{-j\theta}\big(\hat{a}_m -\alpha_\theta\big) \frac{\dif }{\dif b}\hat{\rho}^{(M)}_\theta   \bigg] \\
&
+\frac{12c_1x}{\bar{n}_W^2}\left( c_2^2+c_1\right)\hat{\rho}_\theta^{(M)}  \\
&
- \frac{6c_1x}{\bar{n}_W^2}\bigg[\left(c_2e^{-j\theta}\hat{a}_m-c_1b\right)\frac{\dif }{\dif b}\hat{\rho}_\theta^{(M)} \\& \;\;\;\;\;\;\;\;\;\;\;\;\;\;\;\;\;\;\;\;\;\;\;
+ \frac{\dif }{\dif b}\hat{\rho}_\theta^{(M)}\left(c_2e^{j\theta}\hat{a}^\dagger_m-c_1b\right) \bigg]\\
&
+ \frac{1}{\bar{n}_W}\bigg[\left(c_2e^{-j\theta}\hat{a}_m-c_1b\right)\frac{\dif ^3}{\dif b^3}\hat{\rho}_\theta^{(M)} \\& \;\;\;\;\;\;\;\;\;\;\;\;\;\;\;\;\;\;\;\;\;\;\;
+ \frac{\dif ^3}{\dif b^3}\hat{\rho}_\theta^{(M)}\left(c_2e^{j\theta}\hat{a}^\dagger_m-c_1b\right) \bigg]\\
&
-\frac{6c_1}{\bar{n}_W}\frac{\dif ^2}{\dif b^2}\hat{\rho}_\theta^{M} + o(b). \\
\end{split}
\end{equation}
Setting $b=0$ yields,
\begin{equation} \label{eq:rho_theta_4}
\begin{split}
\frac{\dif ^4}{\dif b^4}&\rho_\theta^{(M)}\bigg\vert_{b=0}
= 
\sum_{m=1}^M 
\frac{-24c_1^2x^2}{\bar{n}_0^3}\hat{a}_m \hat{\rho}^{\bigotimes M}_{\hat{n}_0}  \hat{a}_m^\dagger\\
&
+ 
\frac{6c_1x}{\bar{n}_0^2}\hat{a}_m\frac{\dif ^2}{\dif b^2}\hat{\rho}_\theta^{(M)}\bigg\vert_{b=0}   \hat{a}_m^\dagger\\
&
- \frac{18c_1c_2x}{\bar{n}_0^2} \left[e^{j\theta} \frac{\dif }{\dif b}\hat{\rho}^{(M)}_\theta\bigg\vert_{b=0}  \hat{a}_m^\dagger          + e^{-j\theta}\hat{a}_m \frac{\dif }{\dif b}\hat{\rho}^{(M)}_\theta\bigg\vert_{b=0}   \right] \\
&
+\frac{12c_1x}{\bar{n}_0^2}\left( c_2^2+c_1\right)\hat{\rho}^{\bigotimes M}_{\hat{n}_0}  \\
&
+ \frac{c2}{\bar{n}_0}\left[e^{-j\theta}\hat{a}_m\frac{\dif ^3}{\dif b^3}\hat{\rho}_\theta^{(M)}\bigg\vert_{b=0} + e^{j\theta}\frac{\dif ^3}{\dif b^3}\hat{\rho}_\theta^{(M)}\bigg\vert_{b=0}\hat{a}^\dagger_m \right]\\
&
-\frac{6c_1}{\bar{n}_0}\frac{\dif ^2}{\dif b^2}\hat{\rho}_\theta^{M}\bigg\vert_{b=0}. \\ 
\end{split}
\end{equation}

Using \eqref{eq:rho_theta_1}, \eqref{eq:rho_theta_2}, \eqref{eq:rho_theta_3}, and \eqref{eq:rho_theta_4}, we proceed with calculating the first four derivatives of $\hat{\rho}_\tau^{(M)}$ as expressed in \eqref{eq:rho_tau}. The first derivative of $\hat{\rho}_\tau^{(M)}$ with respect to $b$ evaluated at $b=0$ is 
\begin{equation} \label{eq:rho_tau_1}
\begin{split}
   \frac{\dif}{\dif{b}}\hat{\rho}_\tau^{(M)} \bigg\vert _{b=0}&=  
   \frac{\tau}{4}\Bigg( \frac{\dif}{\dif{b}}\rho_{00}\bigg\vert _{b=0} + \frac{\dif}{\dif{b}}\rho_{01}\bigg\vert _{b=0} \\
   &\;\;\;\;\;\;\;\;+ \frac{\dif}{\dif{b}}\rho_{10}\bigg\vert _{b=0} + \frac{\dif}{\dif{b}}\rho_{11}\bigg\vert _{b=0}\Bigg) \\
   &=
   \frac{\tau}{4}\frac{c_2}{\bar{n}_0}\sum_{m=1}^M
   \hat{a}_m \hat{\rho}_{\bar{n}_0}^{\otimes M} + \hat{\rho}_{\bar{n}_0}^{\otimes M}  \hat{a}_m^\dagger \\
   &\;\;\;\;\;\;\;\;\;\;\;\;\;-
  \left( j\hat{a}_m \hat{\rho}_{\bar{n}_0}^{\otimes M} - j\hat{\rho}_{\bar{n}_0}^{\otimes M}  \hat{a}_m^\dagger \right) \\
   &\;\;\;\;\;\;\;\;\;\;\;\;\;- 
   \left( \hat{a}_m \hat{\rho}_{\bar{n}_0}^{\otimes M} + \hat{\rho}_{\bar{n}_0}^{\otimes M}  \hat{a}_m^\dagger \right) \\
   &\;\;\;\;\;\;\;\;\;\;\;\;\;+ 
   \left( j\hat{a}_m \hat{\rho}_{\bar{n}_0}^{\otimes M} - j\hat{\rho}_{\bar{n}_0}^{\otimes M}  \hat{a}_m^\dagger \right) \\
   &=0
\end{split}
\end{equation}
where the presence of $e^{\pm qj\frac{\pi}{2}}$ in each term of \eqref{eq:rho_theta_1} causes all terms to cancel when summing over $q\in \{0,1,2,3\}$. 

The second derivative of $\hat{\rho}_\tau^{(M)}$ with respect to $b$ evaluated at $b=0$ is 
\begin{equation} \label{eq:rho_tau_2}
\begin{split}
   \frac{\dif^2}{\dif{b}^2}\hat{\rho}_\tau^{(M)} \bigg\vert _{b=0}&=\tau\sum_{{m_1},m_2=1}^M  \frac{2c_1x}{ \bar{n}_0^{2}}\hat{a}_{m_1} \hat{\rho}_{\bar{n}_0}^{\otimes M}   \hat{a}_{m_1}^\dagger \\
&- \frac{2c_1}{\bar{n}_0}\hat{\rho}_{\bar{n}_0}^{\otimes M}  
+ 2\frac{c_2^2}{\bar{n}_0^{2}}        \hat{a}_{m_2}\hat{\rho}_{\bar{n}_0}^{\otimes M}\hat{a}_{m_1}^\dagger  \\
&
+ \frac{c_2^2}{4\bar{n}_0^{2}}        \left[ \hat{a}_{m_1}\hat{a}_{m_2}\hat{\rho}_{\bar{n}_0}^{\otimes M} + \hat{\rho}_{\bar{n}_0}^{\otimes M} \hat{a}_{m_2}^\dagger\hat{a}_{m_1}^\dagger \right]\\
&- \frac{c_2^2}{4\bar{n}_0^{2}}        \left[ \hat{a}_{m_1}\hat{a}_{m_2}\hat{\rho}_{\bar{n}_0}^{\otimes M} + \hat{\rho}_{\bar{n}_0}^{\otimes M} \hat{a}_{m_2}^\dagger\hat{a}_{m_1}^\dagger \right]\\
&+ \frac{c_2^2}{4\bar{n}_0^{2}}        \left[ \hat{a}_{m_1}\hat{a}_{m_2}\hat{\rho}_{\bar{n}_0}^{\otimes M} + \hat{\rho}_{\bar{n}_0}^{\otimes M} \hat{a}_{m_2}^\dagger\hat{a}_{m_1}^\dagger \right]\\
&- \frac{c_2^2}{4\bar{n}_0^{2}}        \left[ \hat{a}_{m_1}\hat{a}_{m_2}\hat{\rho}_{\bar{n}_0}^{\otimes M} + \hat{\rho}_{\bar{n}_0}^{\otimes M} \hat{a}_{m_2}^\dagger\hat{a}_{m_1}^\dagger \right]\\
&=\tau\sum_{{m_1},m_2=1}^M  \frac{2c_1x}{ \bar{n}_0^{2}}\hat{a}_{m_1} \hat{\rho}_{\bar{n}_0}^{\otimes M}   \hat{a}_{m_1}^\dagger \\
&- \frac{2c_1}{\bar{n}_0}\hat{\rho}_{\bar{n}_0}^{\otimes M}  
+ 2\frac{c_2^2}{\bar{n}_0^{2}}        \hat{a}_{m_2}\hat{\rho}_{\bar{n}_0}^{\otimes M}\hat{a}_{m_1}^\dagger.  \\
\end{split}
\end{equation}

The third derivative of $\hat{\rho}_\tau^{(M)}$ with respect to $b$ evaluated at $b=0$ is 
\begin{equation} \label{eq:rho_tau_3}
\begin{split}
   &\frac{\dif^3}{\dif{b}^3}\hat{\rho}_\tau^{(M)} \bigg\vert _{b=0}
   =
\frac{\tau}{4}\sum_{m=1}^M\sum_{q=0}^3
\frac{4c_1x}{\bar{n}_0^2}\hat{a}_m \frac{d\hat{\rho}_\theta^{(M)}}{db} \bigg\vert _{b=0} \hat{a}_m^\dagger\\
&\;\;\;\;-
\frac{6c_1c_2x}{\bar{n}_0^{2}}\left[  e^{-qj\frac{\pi}{2}} \hat{a}_m \hat{\rho}_{\bar{n}_0}^{\bigotimes M} + e^{qj\frac{\pi}{2}}\hat{\rho}_{\bar{n}_0}^{\bigotimes M}\hat{a}_m^\dagger\right] \\
&\;\;\;\;+ 
\frac{c_2}{\bar{n}_0}\left[e^{-qj\frac{\pi}{2}} \hat{a}_m \frac{\dif ^2}{\dif b^2}\hat{\rho}_\theta^{(M)}\bigg\vert _{b=0} + e^{qj\frac{\pi}{2}}\frac{\dif ^2}{\dif b^2}\hat{\rho}_\theta^{(M)}\bigg\vert _{b=0}\hat{a}_m^\dagger \right]\\
&\;\;\;\;-
\frac{4c_1}{\bar{n}_0} \frac{\dif }{\dif b}\hat{\rho}_\theta^{(M)} \bigg\vert _{b=0}\\ 
\end{split}
\end{equation}

Now we move onto the next derivative,
\begin{equation} 
\begin{split}
&\frac{\dif^4}{\dif b^4}\rho_\tau^{(M)}\bigg\vert_{b=0} 
= \frac{\tau}{4}\sum_{\theta}\sum_{m=1}^M 
\frac{-24c_1^2x^2}{\bar{n}_0^3}\hat{a}_m \hat{\rho}^{\bigotimes M}_{\hat{n}_0}  \hat{a}_m^\dagger\\
&\;\;\;
+ 
\frac{6c_1x}{\bar{n}_0^2}\hat{a}_m\frac{\dif ^2}{\dif b^2}\hat{\rho}_Q^{(M)}\bigg\vert_{b=0}   \hat{a}_m^\dagger-\frac{6c_1}{\bar{n}_0}\frac{\dif ^2}{\dif b^2}\hat{\rho}_\theta^{M}\bigg\vert_{b=0}\\
&\;\;\;
- \frac{36c_1c_2^2x}{\bar{n}_0^2} \left[e^{j\theta} \frac{\dif }{\dif b}\hat{\rho}^{(M)}_\theta\bigg\vert_{b=0}  \hat{a}_m^\dagger          + e^{-j\theta}\hat{a}_m \frac{\dif }{\dif b}\hat{\rho}^{(M)}_\theta\bigg\vert_{b=0}   \right] \\
&\;\;\;
+\frac{12c_1x}{\bar{n}_0^2}\left( c_2^2+c_1\right)\hat{\rho}_\theta^{(M)}\bigg\vert_{b=0}  \\
&\;\;\;
+ \frac{c2}{\bar{n}_0}\left[e^{-j\theta}\hat{a}_m\frac{\dif ^3}{\dif b^3}\hat{\rho}_\theta^{(M)}\bigg\vert_{b=0} + e^{j\theta}\frac{\dif ^3}{\dif b^3}\hat{\rho}_\theta^{(M)}\bigg\vert_{b=0}\hat{a}^\dagger_m \right]\\
&\;\;\;
\end{split}
\end{equation}
Plugging in the expressions for the derivatives, we obtain
\begin{equation} \label{eq:rho_tau_4}
    \begin{split}
        \frac{\dif^4}{\dif b^4}\rho_\tau^{(M)}&\bigg\vert_{b=0}=\frac{-24c_1^2x^2}{\bar{n}_0^3}\sum_{m=1}^M 
\hat{a}_m \hat{\rho}^{\bigotimes M}_{\hat{n}_0}  \hat{a}_m^\dagger\\
&\;\;
+ 
\frac{12c_1^2x^2}{\bar{n}_0^4}\sum_{m_1,m_2=1}^M\hat{a}_{m_1}\hat{a}_{m_2}\hat{\rho}^{\bigotimes M}_{\hat{n}_0}\hat{a}_{m_2}^\dagger\hat{a}_{m_1}^\dagger\\
&\;\;
-
\frac{12c_1^2xM}{\bar{n}_0^3}\sum_{m_1=1}^M\hat{a}_{m_1}\hat{\rho}^{\bigotimes M}_{\hat{n}_0}\hat{a}_{m_1}^\dagger\\
&\;\;
+
\frac{12c_1c_2^2x}{\bar{n}_0^4}\sum_{m_1,m_2,m_3=1}^M\hat{a}_{m_1}\hat{a}_{m_3}\hat{\rho}^{\bigotimes M}_{\hat{n}_0}\hat{a}_{m_2}^\dagger\hat{a}_{m_1}^\dagger\\
&\;\;
-
\frac{36c_1c_2^2x}{\bar{n}_0^3}\sum_{m_1,m_2=1}^M 
\hat{a}_{m_2} \hat{\rho}^{\bigotimes M}_{\hat{n}_0}  \hat{a}_{m_1}^\dagger\\
&\;\;
+ 
\frac{12c_1x}{\bar{n}_0^2}\left( c_2^2+c_1\right)\hat{\rho}^{\bigotimes M}_{\hat{n}_0}\\
&\;\;
- 
\frac{12c_1^2xM}{\bar{n}_0^4}\sum_{m_2=1}^M\hat{a}_{m_2}\hat{\rho}^{\bigotimes M}_{\hat{n}_0}\hat{a}_{m_2}^\dagger\\
&\;\;
+
\frac{12c_1^2M^2}{\bar{n}_0^2}\hat{\rho}^{\bigotimes M}_{\hat{n}_0}\\
&\;\;
-
\frac{24c_1c_2^2M}{\bar{n}_0^4}\sum_{m_2,m_3=1}^M\hat{a}_{m_3}\hat{\rho}^{\bigotimes M}_{\hat{n}_0}\hat{a}_{m_2}^\dagger\\
&\;\;
+
\frac{6c_2^4}{\bar{n}_0^4}\sum_{m_1,m_2,m_3,m_4=1}^M\hat{a}_{m_1}\hat{a}_{m_2}\hat{\rho}^{\bigotimes M}_{\hat{n}_0}\hat{a}_{m_4}^\dagger\hat{a}_{m_3}^\dagger\\
&\;\;
+
\frac{8c_1c_2^2x}{\bar{n}_0^4}\sum_{m_1,m_2,m_3=1}^M\hat{a}_{m_1}\hat{a}_{m_2}\hat{\rho}^{\bigotimes M}_{\hat{n}_0}\hat{a}_{m_3}^\dagger\hat{a}_{m_2}^\dagger\\
&\;\;
-
\frac{12c_1c_2^2x}{\bar{n}_0^4}\sum_{m_1,m_2=1}^M\hat{a}_{m_1}\hat{\rho}^{\bigotimes M}_{\hat{n}_0}\hat{a}_{m_2}^\dagger\\
&\;\;
+
\frac{4c_1c_2^2x}{\bar{n}_0^4}\sum_{m_1,m_2,m_3=1}^M\hat{a}_{m_1}\hat{a}_{m_3}\hat{\rho}^{\bigotimes M}_{\hat{n}_0}\hat{a}_{m_3}^\dagger\hat{a}_{m_2}^\dagger\\
&\;\;
    \end{split}
\end{equation}

Having found expressions for the first four derivatives of $\hat{\rho}_\tau^{(M)}$ with respect to $b$, we proceed with finding the first four terms for the Taylor expansion of $\hat{R}$. We use the following lemmas, where $\hat{A}(t)$ and $\hat{B}(t)$ are non-singular operators  parametrized by $t$, and $\hat{I}$ is the identity operator,

\begin{lem}[{\cite[Th.~6]{haber18matrixexplog}}]
\label{lemma:dlogint}
\begin{equation}
\begin{split}
\frac{\mathrm{d}}{\mathrm{d}t}\ln \hat{A}(t)=\int_0^1\mathrm{d}s\left[s\hat{A}(t)+(1-s)\hat{I}\right]^{-1}\\
\bigtimes\frac{\mathrm{d}\hat{A}(t)}{\mathrm{d}t}\left[s\hat{A}(t)+(1-s)\hat{I}\right]^{-1}.
\end{split}
\end{equation}
\end{lem}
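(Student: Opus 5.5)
The plan is to derive the formula from an integral representation of the logarithm and then differentiate under the integral sign. The starting point is the scalar identity
\begin{equation*}
\ln a=\int_0^1 \frac{a-1}{1+s(a-1)}\,\mathrm{d}s ,
\end{equation*}
which holds for every $a$ off the closed negative real axis. By the functional calculus it lifts to
\begin{equation*}
\ln \hat{A}(t)=\int_0^1 \mathrm{d}s\,\big(\hat{A}(t)-\hat{I}\big)\hat{K}_s(t)^{-1},
\qquad \hat{K}_s(t)=s\hat{A}(t)+(1-s)\hat{I}=\hat{I}+s\big(\hat{A}(t)-\hat{I}\big).
\end{equation*}
This lift needs $\hat{K}_s(t)$ to be invertible for all $s\in[0,1]$. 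That is automatic in our application, where $\hat{A}(t)$ is a strictly positive (full-rank) density operator such as $\hat{\rho}_\tau^{(M)}$ or $\hat{\rho}_{\bar{n}_0}^{\otimes M}$. In that case $\hat{K}_s(t)\geq \min(1,\lambda_{\min})\hat{I}$, and $\hat{K}_s(t)^{-1}$ is a bounded operator, uniformly in $s$. Since $\hat{A}(t)-\hat{I}$ and $\hat{K}_s(t)^{-1}$ are functions of the same operator, they commute.

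Next I differentiate the integrand in $t$. The product rule, together with the resolvent identity $\frac{\mathrm{d}}{\mathrm{d}t}\hat{K}_s^{-1}=-\hat{K}_s^{-1}\big(s\,\frac{\mathrm{d}\hat{A}}{\mathrm{d}t}\big)\hat{K}_s^{-1}$, gives
\begin{equation*}
\frac{\mathrm{d}}{\mathrm{d}t}\Big[(\hat{A}-\hat{I})\hat{K}_s^{-1}\Big]
=\frac{\mathrm{d}\hat{A}}{\mathrm{d}t}\hat{K}_s^{-1}-s\,\hat{K}_s^{-1}(\hat{A}-\hat{I})\frac{\mathrm{d}\hat{A}}{\mathrm{d}t}\hat{K}_s^{-1}.
\end{equation*}
Here I have used the commutation above to move $(\hat{A}-\hat{I})$ to the right of $\hat{K}_s^{-1}$. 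This step is needed because $\frac{\mathrm{d}\hat{A}}{\mathrm{d}t}$ itself need not commute with $\hat{A}$.

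Writing the first term as $\hat{K}_s^{-1}\hat{K}_s\frac{\mathrm{d}\hat{A}}{\mathrm{d}t}\hat{K}_s^{-1}$ and factoring gives
\begin{equation*}
\hat{K}_s^{-1}\big[\hat{K}_s-s(\hat{A}-\hat{I})\big]\frac{\mathrm{d}\hat{A}}{\mathrm{d}t}\hat{K}_s^{-1}.
\end{equation*}
The bracket equals $\hat{I}$ by the definition of $\hat{K}_s$. Integrating over $s\in[0,1]$ then yields the claimed expression.

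The main obstacle is justifying the exchange of $\frac{\mathrm{d}}{\mathrm{d}t}$ with the $s$-integral, and making sense of $\ln\hat{A}(t)$, when the operators act on an infinite-dimensional Fock space. I would handle this in three steps.
\begin{itemize}
\item First prove the identity in finite dimensions, or equivalently on photon-number cutoffs, where everything is a matrix. There the integrand is jointly continuous in $(s,t)$ and dominated convergence applies directly.
\item Use the uniform lower bound on $\hat{K}_s(t)$ for $t$ in a neighbourhood, together with differentiability of $\hat{A}(t)$ in trace norm, to bound the $t$-derivative of the integrand uniformly in $s$. This justifies the interchange.
\item Pass to the limit in the cutoff. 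For the Taylor expansion at $b=0$ that follows, only the finitely many terms in which the derivatives from \eqref{eq:rho_tau_1}--\eqref{eq:rho_tau_4} appear inside traces are needed, so this suffices.
\end{itemize}
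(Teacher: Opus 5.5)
Your derivation is correct and follows the same route as the cited result. The paper gives no in-text proof and defers to Haber, whose proof runs the same way: the integral representation $\ln\hat{A}=\int_0^1(\hat{A}-\hat{I})\hat{K}_s^{-1}\dif s$, the inverse-derivative identity stated here as Lemma~\ref{lemma:dinv}, commuting $\hat{A}-\hat{I}$ past $\hat{K}_s^{-1}$, and $\hat{K}_s-s(\hat{A}-\hat{I})=\hat{I}$. You also state the right hypothesis: $\hat{K}_s(t)$ must be invertible for every $s\in[0,1]$, i.e.\ the spectrum of $\hat{A}(t)$ must avoid $(-\infty,0]$. The lemma's ``non-singular'' is not enough, since $\hat{A}=-\hat{I}$ gives $\hat{K}_{1/2}=0$.

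The one thing that is wrong is your infinite-dimensional remark. A full-rank density operator on Fock space has no smallest eigenvalue. Its eigenvalues (e.g.\ the products $\prod_i t_{k_i}$ for $\hat{\rho}_{\bar{n}_0}^{\otimes M}$) accumulate at $0$, so there is no $\lambda_{\min}>0$. As a result:
$\|\hat{K}_s^{-1}\|=1/(1-s)$ for $s<1$; $\hat{K}_1^{-1}=\hat{A}^{-1}$ is unbounded; and the natural dominating bound $\|\frac{\dif\hat{A}}{\dif t}\|/(1-s)^2$ on the integrand is not integrable at $s=1$. This is consistent with $\ln\hat{A}$ itself being unbounded (affine in the number operator for a thermal state). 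So your second bullet fails. The cutoff limit in your third bullet also cannot rest on truncation bounds that degenerate as the cutoff grows.

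If you want an operator version, argue through matrix elements in the eigenbasis of $\hat{A}$, or through traces against the states. There the $s$-integrals are scalar and finite, e.g.\ $\int_0^1 \lambda[s\lambda+1-s]^{-2}\dif s=1$, which is the identity the appendix actually uses. For the lemma as a matrix identity, which is what the citation provides, your proof is complete.
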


\begin{lem}[{\cite[lemma in Sec.~4]{haber18matrixexplog}}]
\label{lemma:dinv}
\begin{align}
\frac{\mathrm{d}}{\mathrm{d}t}\hat{B}^{-1}(t)=-\hat{B}^{-1}(t)\frac{\mathrm{d}\hat{B}(t)}{\mathrm{d}t}\hat{B}^{-1}(t).
\end{align}
\end{lem}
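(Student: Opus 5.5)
The plan is to differentiate the identity $\hat{B}(t)\hat{B}^{-1}(t)=\hat{I}$ with the product rule and then solve for the derivative of the inverse. The one point that needs care is that the product rule presupposes $t\mapsto\hat{B}^{-1}(t)$ is differentiable, so I will first prove that directly from a difference quotient.

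\emph{Step 1: a resolvent-type identity.} Fix $t$ and a small increment $h$. Both $\hat{B}(t)$ and $\hat{B}(t+h)$ are non-singular, so inserting $\hat{I}=\hat{B}(t)\hat{B}^{-1}(t)$ on the right of $\hat{B}^{-1}(t+h)$ and $\hat{I}=\hat{B}(t+h)\hat{B}^{-1}(t+h)$ on the left of $\hat{B}^{-1}(t)$ gives the exact algebraic identity
\begin{equation*}
\hat{B}^{-1}(t+h)-\hat{B}^{-1}(t)=-\hat{B}^{-1}(t+h)\big[\hat{B}(t+h)-\hat{B}(t)\big]\hat{B}^{-1}(t).
\end{equation*}
Dividing by $h$ leaves only one thing to establish: $\hat{B}^{-1}(t+h)\to\hat{B}^{-1}(t)$ as $h\to0$.

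\emph{Step 2: continuity of the inverse (the main obstacle).} In finite dimensions this follows from Cramer's rule, since the entries of the inverse are rational functions of the entries of $\hat{B}$ with nonvanishing denominator. In our setting the operators act on Fock space, so I use a Neumann-series argument instead, assuming $\hat{B}^{-1}(t)$ is bounded. Write
\begin{equation*}
\hat{B}(t+h)=\hat{B}(t)\big[\hat{I}+\hat{B}^{-1}(t)\Delta_h\big],\qquad \Delta_h=\hat{B}(t+h)-\hat{B}(t).
\end{equation*}
Differentiability of $\hat{B}$ gives $\|\Delta_h\|\to0$. Once $\|\hat{B}^{-1}(t)\Delta_h\|<1$, the bracket is invertible via $\sum_k(-\hat{B}^{-1}(t)\Delta_h)^k$. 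The resulting bound
\begin{equation*}
\|\hat{B}^{-1}(t+h)-\hat{B}^{-1}(t)\|\le \frac{\|\hat{B}^{-1}(t)\|^2\|\Delta_h\|}{1-\|\hat{B}^{-1}(t)\|\|\Delta_h\|}
\end{equation*}
tends to zero. If the operators in Appendix~\ref{ap:covertness} are unbounded (e.g.\ inverses of thermal states), I would either work with a photon-number cutoff and take the limit afterward, or treat the identity as a formal rule. This matches how Lemma~\ref{lemma:dlogint} is used in this appendix.

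\emph{Step 3: take the limit.} Letting $h\to0$ in the identity of Step~1, the left factor converges to $\hat{B}^{-1}(t)$ by Step~2, and $\Delta_h/h\to\frac{\mathrm{d}\hat{B}(t)}{\mathrm{d}t}$. This yields
\begin{equation*}
\frac{\mathrm{d}}{\mathrm{d}t}\hat{B}^{-1}(t)=-\hat{B}^{-1}(t)\frac{\mathrm{d}\hat{B}(t)}{\mathrm{d}t}\hat{B}^{-1}(t).
\end{equation*}
The same formula also follows from applying the product rule to $\hat{B}\hat{B}^{-1}=\hat{I}$, namely $\frac{\mathrm{d}\hat{B}}{\mathrm{d}t}\hat{B}^{-1}+\hat{B}\frac{\mathrm{d}\hat{B}^{-1}}{\mathrm{d}t}=0$, and then left-multiplying by $\hat{B}^{-1}$.

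Operator ordering must be preserved throughout, since $\hat{B}$ and its derivative need not commute. This is why the derivative stays sandwiched between two copies of $\hat{B}^{-1}$, and it is the form needed when differentiating the kernel $[s\hat{A}+(1-s)\hat{I}]^{-1}$ in Lemma~\ref{lemma:dlogint}.
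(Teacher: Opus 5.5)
Your proof is correct, but your main argument takes a different route from the paper. The paper gives no proof of its own. It cites Haber's notes, where the identity is obtained as in your closing remark: differentiate $\hat{B}\hat{B}^{-1}=\hat{I}$ with the product rule and multiply on the left by $\hat{B}^{-1}$. That short derivation assumes that $t\mapsto\hat{B}^{-1}(t)$ is differentiable. For finite matrices this is harmless, by Cramer's rule. Your Steps 1--3 supply exactly that missing justification. The resolvent identity reduces the difference quotient to continuity of inversion, and the Neumann-series bound gives that continuity. Your version is therefore self-contained and holds without change for norm-differentiable families of bounded operators with bounded inverses, whereas the cited route is shorter but implicitly finite-dimensional. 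The extra generality matters in this paper. There the lemma is applied to $\hat{B}=s\hat{\rho}+(1-s)\hat{I}$ with $\hat{\rho}\ge 0$, so $\hat{B}\ge(1-s)\hat{I}$ and $\|\hat{B}^{-1}\|\le 1/(1-s)$ for every $s<1$. Your bounded-inverse hypothesis thus holds with no photon-number cutoff. The unboundedness you worry about arises only at $s=1$ and through $\ln\hat{\rho}$, which is an issue for Lemma~\ref{lemma:dlogint}, not this lemma.
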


\subsection{First Term}
The first derivative of $\hat{R}$ with respect to $b$ is found using Lemma \ref{lemma:dlogint},

\begin{equation} 
\begin{split}
\frac{\dif \hat{R}}{\dif b} 
&= 
\hat{\rho}_\tau^{(M)}\int_0^1\dif s\left\{ \hat{\sigma}_1^{-1} \frac{\dif }{\dif b}\hat{\rho}_\tau^{(M)}  \hat{\sigma}_1^{-1}\right\} \\
&+\frac{\dif }{\dif b}\hat{\rho}_\tau^{(M)}\log\hat{\rho}_\tau^{(M)} 
- \frac{\dif }{\dif b}\hat{\rho}_\tau^{(M)}\log\hat{\rho}_{\bar{n}_0}^{\otimes M}.
\end{split}
\end{equation}
where $\hat{\sigma}_1=(s\hat{\rho}_\tau^{(M)}+(1-s)\hat{I})$. 
Setting $b=0$ yields,

\begin{equation} 
\begin{split}
\frac{\dif \hat{R}}{\dif b}\bigg\vert _{b=0} 
&= 
\hat{\rho}_{\bar{n}_0}^{\otimes M}\int_0^1\dif s\left\{ \hat{\sigma}_0^{-1}\left( \frac{\dif }{\dif b}\hat{\rho}_\tau^{(M)}\bigg\vert _{b=0} \right) \hat{\sigma}_0^{-1}\right\} \\
&+\frac{\dif }{\dif b}\hat{\rho}_\tau^{(M)}\bigg\vert _{b=0}\log\hat{\rho}_{\bar{n}_0}^{\otimes M} 
- \frac{\dif }{\dif b}\hat{\rho}_\tau^{(M)}\bigg\vert _{b=0}\log\hat{\rho}_{\bar{n}_0}^{\otimes M} \\
&=0
\end{split}
\end{equation}
where $\hat{\sigma}_0= (s\hat{\rho}_{\bar{n}_0}^{\otimes M}+(1-s)\hat{I})$.
The first term is equal to zero as $\frac{\dif}{\dif{b}}\hat{\rho}_\tau^{(M)} \big\vert _{b=0}=0$ and the final two terms cancel.

\subsection{Second Term}
The second derivative of $\hat{R}$ with respect to $b$ is found using Lemma \ref{lemma:dlogint} and Lemma \ref{lemma:dinv},

\begin{equation} 
\begin{split}
&\frac{\dif ^2\hat{R}}{\dif b^2} = 
\frac{\dif ^2}{\dif b^2}\hat{\rho}_\tau^{(M)}\log\hat{\rho}_\tau^{(M)}
- \frac{\dif ^2}{\dif b^2}\hat{\rho}_\tau^{(M)} log\;\hat{\rho}_{\bar{n}_0}^{\otimes M}\\
&\;\;+
2\frac{\dif }{\dif b}\hat{\rho}_\tau^{(M)}\int_0^1\dif s\left\{ \hat{\sigma}_0^{-1}\left( \frac{\dif }{\dif b}\hat{\rho}_\tau^{(M)} \right) \hat{\sigma}_0^{-1}\right\} \\
&\;\;+ 
\hat{\rho}_\tau^{(M)}\int_0^1\dif s\left\{ \hat{\sigma}_0^{-1}\left( \frac{\dif ^2}{\dif b^2}\hat{\rho}_\tau^{(M)} \right) \hat{\sigma}_0^{-1}\right\}\\
&\;\;-
2\hat{\rho}_\tau^{(M)}\int_0^1\dif s \left\{s \hat{\sigma}_0^{-1}\left( \frac{\dif }{\dif b}\hat{\rho}_\tau^{(M)} \right) \hat{\sigma}_0^{-1} \left( \frac{\dif }{\dif b}\hat{\rho}_\tau^{(M)} \right) \hat{\sigma}_0^{-1} \right\} \\
\end{split}
\end{equation}
Setting $b=0$ causes the first two terms to cancel, while the third and fith terms evaluate to zero due to $\frac{\dif}{\dif{b}}\hat{\rho}_\tau^{(M)} \big\vert _{b=0}=0$, resulting in, 
\begin{equation}\label{eq:R_2}
\begin{split}
&\frac{\dif ^2\hat{R}}{\dif b^2}\bigg\vert _{b=0} = 
\hat{\rho}_{\bar{n}_0}^{\otimes M}\int_0^1\dif s\left\{ \hat{\sigma}_0^{-1}\left( \frac{\dif ^2}{\dif b^2}\hat{\rho}_\tau^{(M)}\bigg\vert _{b=0} \right) \hat{\sigma}_0^{-1}\right\}\\
&=
2\tau\bar{n}_0^{-2}c_1x\sum_{m=1}^M\hat{\rho}_{\bar{n}_0}^{\otimes M}\int_0^1\dif s\left\{ \hat{\sigma}_0^{-1}\left( \hat{a}_m \hat{\rho}_{\bar{n}_0}^{\otimes M}   \hat{a}_m^\dagger  \right) \hat{\sigma}_0^{-1}\right\}\\
&-
2\tau\bar{n}_0^{-1}c_1\sum_{m=1}^M\hat{\rho}_{\bar{n}_0}^{\otimes M}\int_0^1\dif s\left\{ \hat{\sigma}_0^{-1}\left( \hat{\rho}_{\bar{n}_0}^{\otimes M} \right) \hat{\sigma}_0^{-1}\right\}\\
&+   
 2\tau\bar{n}_0^{-2}c_2^2  \sum_{m_1,m_2=1}^M \hat{\rho}_{\bar{n}_0}^{\otimes M}\int_0^1\dif s\left\{ \hat{\sigma}_0^{-1}\left(\hat{a}_{m_1}\hat{\rho}_{\bar{n}_0}^{\otimes M} \hat{a}_{m_2}^\dagger \right) \hat{\sigma}_0^{-1}\right\} \\
\end{split}
\end{equation}

Focusing on the 1st term of \eqref{eq:R_2}, note that $\hat{a}_m \hat{\rho}_{\bar{n}_0}^{\otimes M}   \hat{a}_m^\dagger = \sum_{k_1,...,k_M=0}^\infty \frac{\bar{n}_0}{\bar{n}_0 + 1}(k_m+1)\prod_{i=1}^Mt_{k_i} \bigotimes_{i=1}^M\ket{k_i}\bra{k_i}$ where $t_{k_i}=\frac{\bar{n}_0^{k_i}}{(1+\bar{n}_0)^{k_i+1}}$. Thus, 

\begin{equation}
\begin{split}
&\int_0^1\dif s\left\{ \hat{\sigma}_0^{-1}\left( \hat{a}_m \hat{\rho}_{\bar{n}_0}^{\otimes M}   \hat{a}_m^\dagger  \right) \hat{\sigma}_0^{-1}\right\} \\&
= 
\int_0^1\dif s\left\{ \hat{a}_m \hat{\rho}_{\bar{n}_0}^{\otimes M}   \hat{a}_m^\dagger  \hat{\sigma}_0^{-2}\right\}\\
&=
\sum_{k_1,...,k_M=0}^\infty \frac{\bar{n}_0}{\bar{n}_0 + 1}(k_m+1)\\&
\quad\quad\bigtimes\int_0^1\dif s\prod_{i=1}^Mt_{k_i} \left[s\prod_{i=1}^Mt_{k_i} + (1-s)\right]^{-2}
\bigotimes_{i=1}^M\ket{k_i}\bra{k_i} \\
&=
\sum_{k_1,...,k_M=0}^\infty \frac{\bar{n}_0}{\bar{n}_0 + 1}(k_m+1)\bigotimes_{i=1}^M\ket{k_i}\bra{k_i} \\
\end{split}
\end{equation}
using the fact that $\int_0^1\dif sA\left[ sA+(1-s)\right]^{-2} = 1$ for $|A|<1$. Thus, the trace of $\hat{\rho}_{\bar{n}_0}^{\otimes M}\int_0^1\dif s\left\{ \hat{\sigma}_0^{-1}\left( \hat{a}_m \hat{\rho}_{\bar{n}_0}^{\otimes M}   \hat{a}_m^\dagger  \right) \hat{\sigma}_0^{-1}\right\}$ is equal to 

\begin{equation}
\begin{split}
&\sum_{k_1,...,k_M=0}^\infty \frac{\bar{n}_0}{\bar{n}_0 + 1}\prod_{i=1}^Mt_{k_i}(k_m+1) \\
&= 
\sum_{k_m=0}^\infty \frac{\bar{n}_0}{\bar{n}_0 + 1}t_{k_m}(k_m+1) \\
&=\bar{n}_0.
\end{split}
\end{equation}
Therefore, the trace of the first term of \eqref{eq:R_2} is $2M\tau\bar{n}_0^{-1}c_1x$. 
For the second term of \eqref{eq:R_2}, the integral over $s$ evaluates to identity such that the trace over the second term is simply the coefficient, $-2M\tau\bar{n}_0^{-1}c_1$.
he third term is evaluated similarly to the first term. The only non-zero contributions to the trace are a result of when $m_1=m_2$ such that the trace of the third term is $2M\tau\bar{n}_0^{-1}c_2^2$. 

The sum of all three traces is equal to the second derivative of $\hat{R}$ with respect to $b$ evaluated at $b=0$,  
\begin{equation}
\begin{split}
\frac{\dif ^2\hat{R}}{\dif b^2}&\bigg\vert _{b=0}=2M\tau\bar{n}_0^{-1}\left[ c_1x-c_1+c_2^2  \right] \\
&= 2M\tau\bar{n}_0^{-1}\left[ (1-\eta_2)x-(1-\eta_2)+(1-\eta_2)(1-x)  \right] \\
&= 0,
\end{split}
\end{equation}
therefore, the second term of the Taylor expansion of $\hat{R}$ is also zero.
\subsection{Third Term}
The third derivative of $\hat{R}$ with respect to $b$ is found using Lemma \ref{lemma:dlogint} and Lemma \ref{lemma:dinv} again,
\begin{align} \label{eq:R_3}
&\nonumber\frac{\mathrm{d}^3\hat{R}}{\mathrm{d}b^3}=
3\frac{\dif ^2}{\dif b^2}\hat{\rho}_\tau^{(M)} \int_0^1\mathrm{d}s\hat{\sigma}_1^{-1}(s)\frac{\dif }{\dif b}\hat{\rho}_\tau^{(M)}\hat{\sigma}_1^{-1}(s)\\
\nonumber&\phantom{=}-6\frac{\dif }{\dif b}\hat{\rho}_\tau^{(M)}\int_0^1s\mathrm{d}s\hat{\sigma}_1^{-1}(s)\frac{\dif }{\dif b}\hat{\rho}_\tau^{(M)}\hat{\sigma}_1^{-1}(s)\frac{\dif }{\dif b}\hat{\rho}_\tau^{(M)}\hat{\sigma}_1^{-1}(s)\\
\nonumber&\phantom{=}+3\frac{\dif }{\dif b}\hat{\rho}_\tau^{(M)}\int_0^1\mathrm{d}s\hat{\sigma}_1^{-1}(s)\frac{\dif ^2}{\dif b^2}\hat{\rho}_\tau^{(M)}\hat{\sigma}_1^{-1}(s)\\
\nonumber&\phantom{=}-3\hat{\rho}_\tau^{(M)}\int_0^1s\mathrm{d}s\hat{\sigma}_1^{-1}(s)\frac{\dif ^2}{\dif b^2}\hat{\rho}_\tau^{(M)}\hat{\sigma}_1^{-1}(s)\frac{\dif }{\dif b}\hat{\rho}_\tau^{(M)}\hat{\sigma}_1^{-1}(s)\\
\nonumber&\phantom{=}+6\hat{\rho}_\tau^{(M)}\int_0^1s^2\mathrm{d}s\hat{\sigma}_1^{-1}(s)\frac{\dif }{\dif b}\hat{\rho}_\tau^{(M)}\hat{\sigma}_1^{-1}(s)\\
\nonumber&\phantom{=}\quad\quad\quad\quad\quad\quad\quad\quad\quad\bigtimes\frac{\dif }{\dif b}\hat{\rho}_\tau^{(M)}\hat{\sigma}_1^{-1}(s)\frac{\dif }{\dif b}\hat{\rho}_\tau^{(M)}\hat{\sigma}_1^{-1}(s)\\
\nonumber&\phantom{=}-3\hat{\rho}_\tau^{(M)}\int_0^1s\mathrm{d}s\hat{\sigma}_1^{-1}(s)\frac{\dif }{\dif b}\hat{\rho}_\tau^{(M)}\hat{\sigma}_1^{-1}(s)\frac{\dif ^2}{\dif b^2}\hat{\rho}_\tau^{(M)}\hat{\sigma}_1^{-1}(s)\\
&\phantom{=}+\hat{\rho}_\tau^{(M)}\int_0^1\mathrm{d}s\hat{\sigma}_1^{-1}(s)\frac{\dif^3}{\dif{b}^3}\hat{\rho}_\tau^{(M)}\hat{\sigma}_1^{-1}(s) \\
\nonumber&\phantom{=}+ \frac{\dif^3}{\dif{b}^3}\hat{\rho}_\tau^{(M)}\ln\hat{\rho}_\tau^{(M)}-\frac{\dif^3}{\dif{b}^3}\hat{\rho}_\tau^{(M)}\ln\hat{\rho}_{\bar{n}_0}^{\otimes M}.
\end{align}

Note that all terms in \eqref{eq:R_3} contain factors of $\frac{\dif }{\dif b}\hat{\rho}_\tau^{(M)}$ or $\frac{\dif^3}{\dif{b}^3}\hat{\rho}_\tau^{(M)}$ such that $\frac{\mathrm{d}^3\hat{R}}{\mathrm{d}u^3}\Big\vert_{b=0}=0$ due to $\frac{\dif}{\dif{b}}\hat{\rho}_\tau^{(M)} \Big\vert _{b=0}=\frac{\dif^3}{\dif{b}^3}\hat{\rho}_\tau^{(M)} \Big\vert _{b=0}=0$.

\subsection{Fourth Term}
The fourth derivative of $\hat{R}$ with respect to $b$ is found using Lemma \ref{lemma:dlogint} and Lemma \ref{lemma:dinv} again, where the  following omits all terms with factors of $\frac{\dif}{\dif{b}}\hat{\rho}_\tau^{(M)} $ and $\frac{\dif^3}{\dif{b}^3}\hat{\rho}_\tau^{(M)}$,
\begin{align} 
\nonumber\frac{\mathrm{d}^4\hat{R}}{\mathrm{d} b^4}&=6\frac{\dif^2}{\dif{b}^2}\hat{\rho}_\tau^{(M)}\int_0^1\mathrm{d}s\hat{\sigma}_1^{-1}(s)\frac{\dif^2}{\dif{b}^2}\hat{\rho}_\tau^{(M)}\hat{\sigma}_1^{-1}(s)\\
\nonumber&\phantom{=}-6\hat{\rho}_\tau^{(M)}\int_0^1s\mathrm{d}s\hat{\sigma}_1^{-1}(s)\frac{\dif^2}{\dif{b}^2}\hat{\rho}_\tau^{(M)}\hat{\sigma}_1^{-1}(s)\\
\nonumber&\phantom{=}\quad\quad\quad\quad\quad\quad\quad\quad\bigtimes\frac{\dif^2}{\dif{b}^2}\hat{\rho}_\tau^{(M)}\hat{\sigma}_1^{-1}(s)\nonumber\\
&\phantom{=}+\hat{\rho}_\tau^{(M)}\int_0^1\mathrm{d}s\hat{\sigma}_1^{-1}(s)\frac{\dif^4}{\dif b^4}\rho_\tau^{(M)}\hat{\sigma}_1^{-1}(s)\\
\nonumber&\phantom{=}+\frac{\dif^4}{\dif b^4}\rho_\tau^{(M)}\ln\hat{\rho}_\tau^{(M)}
-\frac{\dif^4}{\dif b^4}\rho_\tau^{(M)}\ln\hat{\rho}_{\bar{n}_0}^{\otimes M}.
\end{align}
Setting $b=0$ yields,
\begin{equation} \label{eq:R_4}
\begin{split}
\frac{\dif^4 \hat{R}}{\dif{b}^4}\bigg\vert _{b=0} &= 
6\frac{\dif^2}{\dif{b}^2}\hat{\rho}_\tau^{(M)}\bigg\vert _{b=0}\int_0^1\dif s \;\hat{\sigma}_0^{-1}\left( \frac{\dif{}^2}{\dif{b}^2}\hat{\rho}_\tau^{(M)}\bigg\vert _{b=0} \right) \hat{\sigma}_0^{-1} \\
&\;\;\;\;
-6 
\hat{\rho}_{\bar{n}_0}^{\otimes M}\int_0^1s\dif s\; \hat{\sigma}_0^{-1}\left( \frac{\dif{}^2}{\dif{b}^2}\hat{\rho}_\tau^{(M)}\bigg\vert _{b=0} \right) \hat{\sigma}_0^{-1}\\&
\;\;\;\;\;\;\;\;\;\;\;\;\;\;\;\;\;\;\;\;\;\;\;\;\;\;\;\;\;\;\;\;\;\;\;\;\times
\left( \frac{\dif{}^2}{\dif{b}^2}\hat{\rho}_\tau^{(M)}\bigg\vert _{b=0} \right) \hat{\sigma}_0^{-1}\\
&\;\;\;\;
+
\hat{\rho}_{\bar{n}_0}^{\otimes M}\int_0^1ds\;  \hat{\sigma}_0^{-1}\left( \frac{\dif{}^4}{\dif{b}^4}\hat{\rho}_\tau^{(M)}\bigg\vert _{b=0}\right) \hat{\sigma}_0^{-1}  \\
\end{split}
\end{equation}  

The trace of the first term of \eqref{eq:R_4} is found using \eqref{eq:rho_tau_2}, 
\begin{equation} \label{eq:R_4_1st}
\begin{split}
&\frac{24\tau^2(c_1^2x^2 + 2c_1c_2^2x)}{\bar{n}_0^2}\left[ M^2  - M + M\frac{1+2\bar{n}_0}{1+\bar{n}_0} \right] \\&- \frac{48\tau^2c_1^2xM^2}{\bar{n}_0^2} 
+\frac{24\tau^2c_1^2M^2}{\bar{n}_0^2} -\frac{48\tau^2c_1c_2^2M^2}{\bar{n}_0^2}\\& + \frac{24\tau^2c_2^4M^2(1+2\bar{n}_0)}{\bar{n}_0^2(1+\bar{n}_0)}
\end{split}
\end{equation}  

The trace of the second term of \eqref{eq:R_4} is found similarly,  
\begin{equation} \label{eq:R_4_2nd}
\begin{split}
&\frac{-12\tau^2(c_1^2x^2 + 2c_1c_2^2x)}{\bar{n}_0^2}\left[ M^2  - M + M\frac{1+2\bar{n}_0}{1+\bar{n}_0}  \right] \\&+ \frac{24\tau^2c_1^2xM^2}{\bar{n}_0^2} 
-\frac{12\tau^2c_1^2M^2}{\bar{n}_0^2} +\frac{24\tau^2c_1c_2^2M^2}{\bar{n}_0^2} \\&- \frac{12\tau^2c_2^4M^2(1+2\bar{n}_0)}{\bar{n}_0^2(1+\bar{n}_0)}.
\end{split}
\end{equation}

The trace of the final term of \eqref{eq:R_4} is found using \eqref{eq:rho_tau_4},
\begin{equation} \label{eq:R_4_3rd}
\begin{split}
&\frac{-12\tau^2c_1^2x^2M}{\bar{n}_0^2}+ \frac{12\tau^2c_1^2x^2M^2}{\bar{n}_0^2}-\frac{12\tau^2c_1c_2^2xM}{\bar{n}_0^2}\\&+ \frac{24\tau^2c_1c_2^2xM^2}{\bar{n}_0^2} 
+\frac{12\tau^2c_1^2xM}{\bar{n}_0^2} -\frac{24\tau^2c_1c_2^2M^2}{\bar{n}_0^2} \\&+ \frac{12\tau^2M^2(c_1^2+c_2^2)}{\bar{n}_0^2}.
\end{split}
\end{equation}

The sum of \eqref{eq:R_4_1st}, \eqref{eq:R_4_2nd}, and \eqref{eq:R_4_3rd} equals the full trace, $\text{Tr}\left\{\frac{\dif^4 \hat{R}}{\dif{b}^4}\Big\vert _{b=0} \right\}=\frac{12\tau^2c_1^2M}{\bar{n}_0(1+\bar{n}_0)} \left[ M(1-x)^2 + x(2-x)\right]$. This yields the fourth term in the Taylor series $\frac{1}{4!}\frac{\dif^4 \hat{R}}{\dif{b}^4} = \frac{\tau^2c_1^2M}{2\bar{n}_0(1+\bar{n}_0)} \left[ M(1-x)^2 + x(2-x)\right]$. Substituting $\bar{n}_0 = \eta_2\bar{n}_{B_2}$, $b=\sqrt{\bar{n}_s'}$, and  $c_1=1-\eta_2$, we obtain
\begin{equation} 
\begin{split}
D(\hat{\rho}_\tau^{(M)} ||\hat{\rho}_{\bar{n}_0}^{\otimes M})=  
&\left[ M(1-x)^2 + x(2-x)\right]\\
&\bigtimes\frac{M\tau^2(1-\eta_2)^2\bar{n}_T^2}{2\eta_2\bar{n}_{B_2}(1+\eta_2\bar{n}_{B_2})} + o(\bar{n}_T^2).
\end{split}
\end{equation}  
Discarding higher order terms, the QRE can be expressed in terms of Bob's input displacement and thermal power by substituting $x=\frac{\bar{n}_S'}{\bar{n}_S'+\bar{n}_\alpha/M}$, and $\bar{n}_T= \bar{n}_\alpha/M + \bar{n}_{S}'$,
\begin{equation} 
\begin{split}
D(\hat{\rho}_\tau^{(M)} ||\hat{\rho}_{\bar{n}_0}^{\otimes M})=  
&\left[ \bar{n}_\alpha^2+2\bar{n}_\alpha\bar{n}_S^\prime+M\bar{n}_S^{\prime2}\right]\\
&\bigtimes\frac{\tau^2(1-\eta_2)^2}{2\eta_2\bar{n}_{B_2}(1+\eta_2\bar{n}_{B_2})}.
\end{split}
\end{equation}  
Thus, using the covertness criterion discussed in Section \ref{sec:covertnesscriterion} and by the additivity of QRE, we obtain the following bound on $\tau$,
\begin{equation} \label{}
\begin{split}
\tau\leq  \frac{1}{(1-\eta_2)} \sqrt{\frac{2\eta_2\bar{n}_{B_2}(1+\eta_2\bar{n}_{B_2})}{\bar{n}_\alpha^2+2\bar{n}_\alpha\bar{n}_S^\prime+M\bar{n}_S^{\prime2}}}\sqrt{\frac{\delta_\text{QRE}}{n/M}}.
\end{split}
\end{equation}  

\begin{rk}
In the case of only displacement with no thermal input, i.e. $\bar{n}_S'=0$, we have 
\begin{equation} 
\begin{split}
\tau\leq  \frac{\sqrt{2\eta_2\bar{n}_{B_2}(1+\eta_2\bar{n}_{B_2})}}{(1-\eta_2)\bar{n}_\alpha} \sqrt{\frac{\delta_\text{QRE}}{n/M}}
\end{split}
\end{equation}  
while in the case of only thermal input with no initial displacement, i.e. $\bar{n}_\alpha=0$, the QRE evaluates to 
\begin{equation} 
\begin{split}
\tau\leq  \frac{\sqrt{2\eta_2\bar{n}_{B_2}(1+\eta_2\bar{n}_{B_2})}}{(1-\eta_2)\bar{n}_S'} \sqrt{\frac{\delta_\text{QRE}}{n}}.
\end{split}
\end{equation} 
\end{rk}

\section{Corellator-based Scheme Performance}\label{ap:ccr_performance}
\subsection{Communication Task}

Here, we characterize the output distribution of the proposed classical correlation receiver in the case of discarding or transmitting the signal, $H_0$ and $H_1$, respectively. While we assume that channel one and channel two are lossless, this analysis can be extended to the more general case. On each mode, the receiver reports a 1 if a photon is detected on the idler and signal SPD channels, $c_i$ and $c_s$. Otherwise, the receiver reports a zero. Therefore the output distribution under each hypothesis $H_m\text{ , }m\in\{0,1\}$, is Bernouli with the following probability of success, 
\begin{align}
    p_{\rm CC}^{m} = {\rm Pr}({\text{click on }c_i}){\rm Pr}({\text{click on }c_s}|{\text{click on }c_i},H_m). 
\end{align}
The idler mode follows a thermal-state photon-number distribution, therefore $ {\rm Pr}({\text{click on }c_i})=1-\frac{1}{1+\bar{n}_s}$. Under the null hypothesis, the state returned to Alice is thermal with mean photon number per mode, $(1-\eta)\bar{n}_B$. Note these detection outcomes are independent. Therefore, ${\rm Pr}({\text{click on }c_s}|{\text{click on }c_i},H_0) = 1-\frac{1}{1+(1-\eta)\bar{n}_B}$, such that
\begin{align}
    p^0_{\rm CC} = \left(1-\frac{1}{1+\bar{n}_s}\right)\left(1-\frac{1}{1+(1-\eta)\bar{n}_B}\right)\label{eq:psucc0}
\end{align}
The state transmitted by Alice is given by a photon-number diagonal state 
\begin{equation}
\begin{split}
    \hat{\rho}_{S,1} &= \frac{1}{{{\rm Pr}({\text{click on }c_i})}}\sum_{k=1}^\infty \frac{\bar{n}_s^k}{(1+\bar{n}_s)^{k+1}}\ket{k}\bra{k}\\&
    = \sum_{k=1}^\infty p_{1}\left[k\right]\ket{k}\bra{k},
\end{split}
\end{equation}
where $ p_{1}\left[k\right] =  \left(1-\frac{1}{1+\bar{n}_s}\right)^{-1}\frac{\bar{n}_s^k}{(1+\bar{n}_s)^{k+1}}$. 
Under $H_1$ the returned state is a then a mixture 
\begin{align}
    \hat{\rho}_{A,1} = \sum_{k=1}^{\infty} p_1[k]\hat{\rho}_{\eta,\bar{n}_B}(k),
\end{align}
with $\hat{\rho}_{\eta,\bar{n}_B}(k)=\text{Tr}_W\left[\mathcal{E}^{(\eta,\bar{n}_B)}_{B\to AW}\left( \ket{k}\bra{k}\right) \right]$ being the state given by transmission of photon-number state $\ket{k}$ over the lossy thermal-noise bosonic channel with transmittance $\eta$ and mean environment photon number per mode $\bar{n}_B$. We can express $\hat{\rho}_{\eta,\bar{n}_B}(k)$ by its anti-normally ordered characteristic function,
\begin{equation}
\begin{split}
    \chi_{A,k}(\zeta)&=\chi_A^{\ket{k}}(\sqrt{\eta}\zeta)\times \chi_A^{\rm th, \bar{n}_B}(\sqrt{1-\eta}\zeta)\\
    &=e^{-\eta|\zeta|^2}\mathcal{L}_k\left(|\zeta|^2\right)\times e^{-(1+\bar{n}_B)(1-\eta)|\zeta|^2}\\
    &=e^{-(1+(1-\eta)\bar{n}_B)|\zeta|^2}\mathcal{L}_k\left(|\zeta|^2\right),
\end{split}
\end{equation}
 for Laguerre polynomial $\mathcal{L}_m(\cdot)$ of order $m$. 
 The state can be expressed as \begin{align}
     \hat{\rho}_{\eta,\bar{n}_B}(k)=\int\frac{{\dif}^2\zeta}{\pi}\chi_{A,k}(\zeta) \, e^{\zeta a^\dagger} e^{-\zeta^* a}.
 \end{align} 
Using the fact that $\bra{m}e^{\zeta a^\dagger} e^{-\zeta^* a}\ket{m}=\mathcal{L}_m(|\zeta|^2)$, we have 
 \begin{equation}
 \begin{split}
     \bra{m}\hat{\rho}_{\eta,\bar{n}_B}(k)\ket{m}&= \int\frac{{\rm d}^2\zeta}{\pi}e^{-(1+(1-\eta)\bar{n}_B)|\zeta|^2} \\ &\;\;\;\;\;\;\;\;\;\;\;\;\;\;\;\;\;\;\;
\mathcal{L}_k\left(|\zeta|^2\right)\mathcal{L}_m\left(|\zeta|^2\right)
     \end{split}
 \end{equation}
 For $k = 1$, 
\begin{equation}
\begin{split}
     \bra{m}\hat{\rho}_{\eta,\bar{n}_B}(1)&\ket{m} = \frac{((1-\eta)\bar{n}_B)^m}{(1+(1-\eta)\bar{n}_B)^{m+1}} \\&+ \eta\frac{((1-\eta)\bar{n}_B)^{m-1}(m-(1-\eta)\bar{n}_B)}{(1+(1-\eta)\bar{n}_B)^{m+2}}\label{eq:fockrepsinglerail},
\end{split}
\end{equation}
and the probability of getting a click on $c_s$ given $k=1$ is  
\begin{equation}
\begin{split}
    {\rm Pr}(\text{click on }c_s|k=1,H_1) = 1 &- \frac{1}{(1+(1-\eta)\bar{n}_B)}
    \\&+\eta\frac{1}{(1+(1-\eta)\bar{n}_B)^{2}}.
\end{split}
\end{equation}
Since probability of getting a click can only increase with higher $k$, we may lower bound the probability of success by 
\begin{equation}
\begin{split}
    p^1_{CC} \geq \left(1-\frac{1}{1+\bar{n}_s}\right)
    \bigg(1 &- \frac{1}{(1+(1-\eta)\bar{n}_B)}\\&+\eta\frac{1}{(1+(1-\eta)\bar{n}_B)^{2}}\bigg) \label{lower bound CCR}
\end{split}
\end{equation}
Since $p^1_{\rm CC}\geq p^0_{\rm CC}$ for all values of $\eta,\bar{n}_B,\bar{n}_s$, and we seek to discriminate between hypotheses, the lower bound in \eqref{lower bound CCR} effectively lower bounds receiver performance.
Note that this is an excellent approximation in the regime of interest, where $\bar{n}_s\ll1$.

The receiver sums the number of successes over the $M$-mode symbol before choosing a hypothesis based on a threshold. 
Thus, the receiver outputs a binomial distribution under each hypothesis, with $M$ trials and success probability $p_{\rm CC}^m$. Choosing a decision threshold induces a binary asymmetric channel between Bob and Alice. For the throughput analysis, the threshold and prior probabilities are chosen such that the mutual information over the induced channel is maximized.

\begin{rk}
    The performance of the detector is severely limited in the regime of  $(1-\eta)\bar{n}_B\gg1$ or $\bar{n}_B \gg 1$, since the probability of success under both hypotheses goes to $1$. This issue can be resolved with the use of photon number resolving detectors which modifies the outputs as $\sum_{i=1}^M N_i N_s$, where $N_i$ and $N_s$ are the random variables corresponding to photon number resolving detector output for the idler and signal channels, respectively. We omit this detector in our simulations using MODTRAN data since, in that regime, its performance is approximately the same as the SPD-based detector, and PNR detectors are less practical to use.
\end{rk}
\subsection{Ranging Task}
Next, we show that the SPD corellator-baser receiver can reliably estimate the path length, $l$, when it is unknown at the time of transmission without sacrificing the performance of the communication task. 

We state the following lemma,
\begin{lem}\label{lem:ranging_succ}
    Let \(K \sim \mathrm{Bin}(N,q_1)\) and \(X \sim \mathrm{Bin}(N,q_0)\) for Binomial distribution $\mathrm{Bin}(\cdot,\cdot)$ with \(q_1 > q_0\). Further, let $f_N^{(i)}(\cdot)$ and $F_N^{(i)}(\cdot)$ be the probability mass and cumulative distribution functions for $\mathrm{Bin}(N,q_i)$. Define $S_N 
    \triangleq \sum_{k=1}^N f_N^{(1)}[k] \big(F_N^{(0)}[k-1]\big)^{N_R}$. Then, for $N_R\in \mathcal{O}(N)$, 
\begin{equation}
    \lim_{N\to\infty}S_N = 1
\end{equation}
\end{lem}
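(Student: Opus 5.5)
Our plan is to interpret $S_N$ probabilistically and then use concentration. Let $K\sim\mathrm{Bin}(N,q_1)$ and let $X_1,\dots,X_{N_R}$ be i.i.d.\ $\mathrm{Bin}(N,q_0)$, independent of $K$. Then
\[
S_N=\Pr\bigl(K\ge 1,\ X_r\le K-1 \text{ for all } r\bigr)=\Pr\Bigl(K\ge 1,\ \max_{r} X_r<K\Bigr).
\]
This is the probability that the correct-lag correlation count strictly exceeds all $N_R$ wrong-lag counts. It therefore suffices to show that $1-S_N\to0$. Note also that $S_N\le 1$ trivially.

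Fix the midpoint $t=\tfrac{q_0+q_1}{2}$ and set $\Delta=\tfrac{q_1-q_0}{2}>0$. By a union bound,
\[
1-S_N\le \Pr(K\le tN)+\Pr\Bigl(\max_{r\le N_R}X_r\ge tN\Bigr)\le \Pr(K\le tN)+N_R\,\Pr(X_1\ge tN).
\]
On the complement event, $K>tN\ge \max_r X_r$ holds, and $K\ge1$ follows automatically because $t>0$. Hoeffding's inequality gives $\Pr(K\le tN)\le e^{-2\Delta^2N}$ and $\Pr(X_1\ge tN)\le e^{-2\Delta^2N}$. Hence
\[
1-S_N\le (1+N_R)\,e^{-2\Delta^2 N}.
\]

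Since $N_R\in\mathcal{O}(N)$, say $N_R\le cN$, the bound becomes $(1+cN)e^{-2\Delta^2N}\to0$. This proves $\lim_{N\to\infty} S_N=1$. In fact the argument tolerates any $N_R$ growing subexponentially, $N_R=e^{o(N)}$, so the hypothesis is more than enough.

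There is no real obstacle here. The one step to handle carefully is the identification of $S_N$ with the probability above, including the boundary terms. The sum starts at $k=1$, and $F_N^{(0)}[k-1]=\Pr(X\le k-1)$ encodes the strict inequality $X<k$. In the application, the $X_r$ may be only weakly dependent because of overlapping padded windows. The union bound does not require independence, so the same estimate applies.
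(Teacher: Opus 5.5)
Your proof is correct and follows essentially the same route as the paper. Both arguments fix a threshold strictly between $q_0$ and $q_1$ and use Hoeffding to show $K$ exceeds it with probability $1-e^{-\Omega(N)}$. Both then bound the upper tail of $X$ exponentially (the paper uses the Chernoff/relative-entropy bound where you use Hoeffding) and absorb the $N_R$ wrong lags by a factor linear in $N_R$. Your union bound over $N_R$ i.i.d.\ copies is the probabilistic form of the paper's Bernoulli inequality $(1-a)^{N_R}\ge 1-N_R a$ applied inside $\mathbb{E}_K\bigl[(F^{(0)}[K-1])^{N_R}\bigr]$, and it gives the same $1-O(N_R e^{-cN})$ bound.
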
 We defer the proof to the end of the section.
In this scenario, we analyze the probability of success for each output of the receiver when the distance to the target is incorrect. In this case, the clicks on detector $c_i$ are independent of clicks on $c_s$, so the probability of success is given by 
\begin{equation}
\begin{split}
    \tilde{p}^0_{\rm CC} &= {\rm Pr}({\text{click on }c_i}){\rm Pr}({\text{click on }c_s}|{\text{click on }c_i}) \\
    &=
    {\rm Pr}({\text{click on }c_i}){\rm Pr}({\text{click on }c_s}) \\
    &=\left(1-\frac{1}{1+\bar{n}_S}\right)
    \bigg(1-\frac{p_o\tau_M}{1+\eta\bar{n}_S +(1-\eta)\bar{n}_B}\\
    &\quad\quad\quad\quad\quad\quad\quad\quad\quad\quad\quad\quad-\frac{1-\tau_M}{1 +(1-\eta)\bar{n}_B}\bigg)\\
    &\leq\left(1-\frac{1}{1+\bar{n}_S}\right)
    \left(1-\frac{1}{1+\eta\bar{n}_S +(1-\eta)\bar{n}_B}\right)\\
\end{split}
\end{equation}
where $\tau_M$ is the prior probability of transmitting an `on' pulse within each sparsely selected symbol of size $M$, and $p_o=\frac{M}{N_R + 2M}$ represents the proportion of the signal that is aligned with the sparse code.
Let 
\begin{equation}
\begin{split}
        p* &= 1 - \tau_M\left(\frac{1}{(1+(1-\eta)\bar{n}_B)}+\eta\frac{1}{(1+(1-\eta)\bar{n}_B)^{2}}\right) \\
        &\quad\quad-\frac{1-\tau_M}{1+(1-\eta)\bar{n}_B}
\end{split}
\end{equation}
be the lower bound on success probability when the modes are matched perfectly conditioned on a click on $c_i$. 
When the modes are matched, the total probability of success for the induced Bernoulli distribution is 
\begin{equation}
\begin{split}
    \tilde{p}^1_{\rm CC} &= {\rm Pr}({\text{click on }c_i}){\rm Pr}({\text{click on }c_s}|{\text{click on }c_i}) \\
    &\geq\left(1-\frac{1}{1+\bar{n}_S}\right)\\ &\bigtimes\left((1-p_o)\left(1-\frac{1}{1+\eta\bar{n}_S +(1-\eta)\bar{n}_B}\right)+p_o p^*\right)
\end{split}
\end{equation}
The sensor sums over the Bernoulli trials, inducing a binomial distribution with $N=\lfloor (2N_R+M)B(n)\tau\rfloor$ trials on average and probability of success $\tilde{p}^m_{\rm CC}$, with $m\in\{0,1\}$ indicating whether the distance to target is selected correctly. Further, let $F^{(m)}[\cdot], f^{(m)}[\cdot]$ be the cumulative distribution function (c.d.f.) and probability mass function (p.m.f.) for each possibility $m$. We assume that the true ranges lies within the number of possible range values $N_R$. Since $\tilde{p}^1_{\rm CC} > \tilde{p}^0_{\rm CC}$, Alice decides the range to the target based on the highest outcome $k$ out of all chosen range values. Thus, her probability of successfully determining the range bin is given by 
\begin{align}
    p_{\rm succ} \geq \sum_{k=1}^{N}f^{(1)}[k] (F^{(0)}[k-1])^{N_R},\label{eq:psucc_rangin}
\end{align}
where the inequality is due to the possibility of a `tie' between different ranges.
Note that, in the limit of $n\to \infty$, $N\in\mathcal{O}(\sqrt{n})$, with $N_R$ fixed. Thus, by Lemma \ref{lem:ranging_succ}, we have that the range can be reliably determined in the limit of channel uses $n$. 

\begin{IEEEproof}[Proof of Lemma \ref{lem:ranging_succ}]
\begin{equation}
\begin{split}
S_N 
&\triangleq \sum_{k=1}^N f^{(1)}[k] \big(F^{(0)}[k-1]\big)^{N_R}\\&=\mathbb{E}_{K}\big[\big(F^{(0)}[K-1]\big)^{N_R}\big].
\end{split}
\end{equation}

Fix any \(\varsigma \in (0, q_1 - q_0)\) and define the event
\begin{align}
A_N &\triangleq \Big\{ \frac{K}{N} > q_0 + \varsigma \Big\}.
\end{align}

By Hoeffding's inequality, with \(\delta := q_1 - (q_0 + \varsigma) > 0\), we have
\begin{equation}
\begin{split}
\Pr(A_N^c) 
&= \Pr\Big( \frac{K}{N} \le q_0 + \varsigma \Big) \\
&\le \exp(-2\delta^2 N).
\end{split}
\end{equation}

For any \(k\) satisfying \(k/N > q_0 + \varsigma\), the Chernoff bound \cite[Ch.~12.9]{cover02IT} gives
\begin{equation}
\begin{split}
\Pr(X \ge k) 
&= \Pr\Big( \frac{X}{N} \ge \frac{k}{N} \Big) \\
&\le \exp\Big( - N D\Big(\frac{k}{N} \Big\| q_0 \Big) \Big) \\
&\le \exp(-c N),
\end{split}
\end{equation}
where $D(p \| q) = p \log\!\left(\frac{p}{q}\right)
  + (1-p)\log\!\left(\frac{1-p}{1-q}\right)$ is the binary relative entropy and \(c>0\) depends on \(\varsigma, q_0\).

Hence, for such \(k\),
\begin{equation}
    \begin{split}
F^{(0)}[k-1] 
&= \Pr(X \le k-1) \\
&= 1 - \Pr(X \ge k) \\
&\ge 1 - e^{-cN}.
\end{split}
\end{equation}

Since \((1 - a)^{N_R} \ge 1 - N_R a\) for \(a \in [0,1]\), we have
\begin{align}
\big(F^{(0)}[k-1]\big)^{N_R} \ge 1 - N_R e^{-c N}.
\end{align}

We may split the expectation in \(S_N\) according to \(A_N\):
\begin{equation}
\begin{split}
S_N 
&= \mathbb{E}\big[\big(F^{(0)}[K-1]\big)^{N_R} \mathbf{1}_{A_N}\big] \\
&\phantom{=}+ \mathbb{E}\big[\big(F^{(0)}[K-1]\big)^{N_R} \mathbf{1}_{A_N^c}\big] \\
&\ge (1 - N_R e^{-cN}) \Pr(A_N) + 0 \\
&\ge (1 - N_R e^{-cN}) (1 - e^{-2\delta^2 N}).
\end{split}
\end{equation}
Since both error terms vanish exponentially as \(N \to \infty\), we conclude
\begin{align}
\lim_{N \to \infty} S_N = 1.
\end{align}
\end{IEEEproof}
\end{document}